\newtheorem{proposition}{Proposition}
\newtheorem{corollary}{Corollary}
\newtheorem{remark}{Remark}
\newtheorem{theorem}{Theorem}
\newtheorem{lemma}{Lemma}
\newtheorem{Example}{Example}
\begin{document}

\title{A new bidirectional generalization of (2+1)-dimensional matrix k-constrained KP hierarchy} 

\author{O.I. Chvartatskyi}
\email{alex.chvartatskyy@gmail.com}
\author{Yu.M. Sydorenko}
\email{y_sydorenko@franko.lviv.ua}
\affiliation{Ivan Franko National University of L'viv}

\date{\today}

\begin{abstract}
We introduce a new bidirectional generalization of (2+1)-dimensional
k-constrained KP hierarchy ((2+1)-BDk-cKPH). This new hierarchy
generalizes (2+1)-dimensional k-cKP hierarchy, $(t_A,\tau_B)$ and
$(\gamma_A,\sigma_B)$ matrix hierarchies. (2+1)-BDk-cKPH contains a
new matrix (1+1)-k-constrained KP hierarchy. Some members of
(2+1)-BDk-cKPH are also listed. In particular, it contains matrix
generalizations of DS systems, (2+1)-dimensional modified
Korteweg-de Vries equation and the Nizhnik equation. (2+1)-BDk-cKPH
also includes new matrix (2+1)-dimensional generalizations of the
Yajima-Oikawa and Melnikov systems. Binary Darboux Transformation
Dressing Method is also proposed for construction of exact solutions
for equations from (2+1)-BDk-cKPH. As an example the exact form of
multi-soliton solutions for vector generalization of the
Davey-Stewartson system is given.

\end{abstract}

\pacs{05.45 Yv, 05.45-a, 02.30 Jr, 02.10 Ud}

\maketitle 

\section{Introduction}\label{dfdf}
In the modern theory of nonlinear integrable systems, algebraic
methods play an important role (see the survey in \cite{LDA}). Among
them there are the Zakharov-Shabat dressing method
\cite{Zakharov,Zakh-Manak,solitons}, Marchenko's method
\cite{March}, and an approach based on the Darboux-Crum-Matveev
transformations \cite{Matveev79,Matveev}. Algebraic methods allow us
to omit analytical difficulties that arise in the investigation of
corresponding direct and inverse scattering problems for nonlinear
equations. A significant contribution to such methods has also been
made by the Kioto group \cite{DJKM1,DJKM2,SS3,MM4,Ohta}. In
particular, they investigated scalar and matrix hierarchies for
nonlinear integrable systems of Kadomtsev-Petviashvili type (KP
hierarchy).

The KP hierarchy 
is of fundamental importance in the theory of integrable systems and
shows up in various ways in mathematical physics. Several extensions
and generalizations of it have been obtained. For example, the
multi-component KP
hierarchy contains several physically relevant nonlinear integrable
systems, including the Davey-Stewartson equation, the
two-dimensional Toda lattice and the three-wave resonant interaction
system. There are several equivalent formulations of this hierarchy:
matrix pseudo-differential operator (Sato) formulation,
$\tau$-function approach via matrix Hirota bilinear identities,
multi-component free fermion formulation.
Another kind of generalization is the so-called ``KP equation with
self-consistent sources'' (KPSCS), discovered by Melnikov
\cite{M1,Mf,M4,M2,M3}. In \cite{SS,KSS,Chenga1,CY,Chenga2},
k-symmetry constraints of the KP hierarchy which have connections
with KPSCS were investigated. 
The resulting k-constrained KP (k-cKP) hierarchy contains physically
relevant systems like the nonlinear Schr\"odinger equation, the
Yajima-Oikawa system, a generalization of the Boussinesq equation,
and the Melnikov system. Multi-component generalizations of the
k-cKP hierarchy were introduced in \cite{SSq}. In the papers
\cite{ZC,Oevel93,Oevel96,Aratyn97} the differential type of the
gauge transformation operator was applied to the k-constrained KP
hierarchy at first. In \cite{Chau1} differential and integral type
of the gauge transformation operators were applied to k-cKP
hierarchy. In \cite{WLG} the binary Darboux transformations for
dressing of the multi-component generalizations of the k-cKP
hierarchy were investigated. A modified k-constrained KP (k-cmKP)
hierarchy was proposed in \cite{CY,KSO,OC}. It contains, for
example, the vector Chen-Lee-Liu and the modified KdV (mKdV)
equation. Multi-component versions of the Kundu-Eckhaus and
Gerdjikov-Ivanov \cite{Gerdjikov1,Gerdjikov2} equations were also
obtained in \cite{KSO}, via gauge transformations of the k-cKP,
respectively the k-cmKP hierarchy.

Moreover, in \cite{MSS,6SSS}, (2+1)-dimensional extensions of the
k-cKP hierarchy were introduced and dressing methods via
differential transformations were investigated. In \cite{BS1,PHD},
exact solutions for some representatives of the (2+1)-dimensional
k-cKP hierarchy were obtained by dressing binary Darboux
transformations. This hierarchy was also rediscovered recently in
\cite{LZL1}. Dressing methods via differential transformations for
this hierarchy and its modified version were investigated in
\cite{LZL2}. The (2+1)-dimensional k-cKP hierarchy in particular
contains the DS-III ($k=1$), Yajima-Oikawa ($k=2$) and Melnikov
($k=3$) hierarchies. The corresponding Lax representations of this
hierarchy (see (\ref{spa1})) consists of one differential and one
integro-differential operator. Our aim was to generalize the
(2+1)-dimensional k-cKP hierarchy (\ref{spa1}) to the case of two
integro-differential operators in Lax pair. It is essential to call
this new hierarchy a bidirectional generalization of
(2+1)-dimensional k-cKP hierarchy or simply (2+1)-BDk-cKP hierarchy
(see (\ref{ex2+1})). We will consider this hierarchy in the most
general matrix case. Another aim of our investigation is a
construction of the Binary Darboux Transformation Dressing Method
for (2+1)-BDk-cKP hierarchy (\ref{ex2+1}).

 This work is organized as follows. In Section 2 we
 present a short survey of results on constraints for KP hierarchies
 and their (2+1)-dimensional generalizations. In Section \ref{extended} we
 introduce (2+1)-BDk-cKP hierarchy. Members of the
 obtained hierarchy are also listed there.  (2+1)-BDk-cKP hierarchy contains matrix generalizations of
  Davey-Stewartson hierarchy (first members of it are two different matrix versions of DS-III system:
  DS-III-a (\ref{DSc2=0}) and DS-III-b (\ref{DSc1=0})), new Yajima-Oikawa and Melnikov hierarchies.
  In Section \ref{dressed} we consider
 dressing via binary Darboux transformation for the hierarchy constructed in Section
 \ref{extended}. Exact forms of solutions for some members of this hierarchy  are
 also presented here in terms of Grammians.
 In the final section,
we discuss the obtained results and mention problems for further
investigations. Some examples of Lax representations from the
bidirectional generalization of (2+1)-dimensional k-constrained
modified KP hierarchy ((2+1)-BDk-cmKP hierarchy) are also presented
there.

\section{$k$-constrained KP hierarchy and its extensions}\label{kckp}

To make this paper self-contained, we briefly introduce the KP
hierarchy \cite{LDA}, its k-symmetry constraints (k-cKP hierarchy),
and the extension of the k-cKP hierarchy to the (2+1)-dimensional
case \cite{MSS,6SSS}. A Lax representation of the KP hierarchy is
given by
\begin{equation}\label{ssk}
L_{t_n}=[B_n,L],\,    \qquad n\geq1,
\end{equation}
where $L=D+U_1D^{-1}+U_2D^{-2}+\ldots$ is a scalar
pseudodifferential operator, $t_1:=x$, $D:=\frac{\partial}{\partial
x}$, and $B_n:= (L^n)_+
 := (L^n)_{\geq0}=D^n+\sum_{i=0}^{n-2}u_iD^i$ is
the differential operator part of $L^n$. The consistency condition
(zero-curvature equations), arising from the commutativity of flows
(\ref{ssk}), is
\begin{equation}
B_{n,t_k}-B_{k,t_n}+[B_n,B_k]=0.
\end{equation}
Let $B^{\tau}_n$ denote the formal transpose of $B_n$, i.e.
$B^{\tau}_n:=(-1)^nD^n+\sum_{i=0}^{n-2}(-1)^iD^iu^{\top}_i$, where
$^{\top}$ denotes the matrix transpose. We will use curly brackets
to denote the action of an operator on a function whereas, for
example, $B_n \, q$ means the composition of the operator $B_n$ and
the operator of multiplication by the function $q$. The following
formula holds for $B_nq$ and $B_n\{{q}\}$:
$B_n\{{q}\}=B_nq-(B_n{q})_{>0}.$ The k-cKP hierarchy
\cite{SS,KSS,Chenga1,CY,Chenga2} is given by
\begin{equation}\label{eq1}
  L_{t_n}=[B_n,L],
  \end{equation}
with the k-symmetry reduction
\begin{equation}\label{eq2}
    L_k:=L^k=B_k+{q}D^{-1}{r}.
\end{equation}
The hierarchy given by (\ref{eq1})-(\ref{eq2}) admits the Lax
representation (here $k\in{\mathbb{N}}$ is fixed):
\begin{equation}\label{Laxeq}
  [L_k,M_n]=0,\,\,\, L_k=B_k+qD^{-1}r,\,\, M_n=\partial_{t_n}-B_n.
\end{equation}
Lax equation (\ref{Laxeq}) is equivalent to the following system:
\begin{equation}\label{sys1}
[L_k,M_n]_{\geq0}=0,\,\, M_n\{q\}=0,\,\,\,M_n^{\tau}\{{r}\}=0.
\end{equation}
Below we will also use the formal adjoint
$B^*_n:=\bar{B}^{\tau}_n=(-1)^nD^n+\sum_{i=0}^{n-2}(-1)^iD^i{u}^*_i$
of $B_n$, where $^\ast$ denotes the Hermitian conjugation (complex
conjugation and transpose).
 In \cite{SSq}, multi-component (vector) generalizations of the
 k-cKP hierarchy were introduced,
\begin{equation}\label{operLk}
L_k:=L^k=B_k+\sum_{i=1}^m\sum_{j=1}^mq_im_{ij}D^{-1}r_j=B_k+{\bf
q}{\cal M}_0D^{-1}{\bf
r}^{\top}, 
\end{equation}
where ${\bf q}=(q_1,\ldots,q_m)$ and ${\bf r}=(r_1,\ldots,r_m)$ are
vector functions, ${\cal M}_0=(m_{ij})_{i,j=1}^m$ is a constant
$m\times m$ matrix. ${\cal M}_0$ can be annihilated by the change of
functions: ${\bf q}{\cal M}_0\rightarrow{\bf q}$, ${\bf
r}^{\top}\rightarrow{\bf r}^{\top}$ (or ${\bf q}\rightarrow{\bf q}$,
${\cal M}_0{\bf r}^{\top}\rightarrow{\bf r}^{\top}$). However, we
remain this matrix because it plays an important role in Section
\ref{dressed} where we consider dressing methods for (2+1)-BDk-cKP
hierarchy (\ref{ex2+1}).
We shall note that extensions of hierarchies given by (\ref{eq1})
and (\ref{operLk}), namely (2+1)-dimensional k-cKP (\ref{LMcom}) and
(2+1)-BDk-cKP (\ref{frex}), (\ref{frexxx}) hierarchies are
represented analogously to (\ref{sys1}) (with more general $L_k$ and
$M_n$ operators).

 For $k=1$, the hierarchy given by (\ref{eq1}) and (\ref{operLk}) is a multi-component generalization of the AKNS
hierarchy. For $k=2$ and $k=3$, one obtains vector generalizations
of the Yajima-Oikawa and Melnikov \cite{Mf,M4} hierarchies,
respectively.

In \cite{CY,KSO,OC}, a k-constrained modified KP (k-cmKP) hierarchy
was introduced and investigated. Its Lax representation has the form
\begin{equation}
[\tilde{L}_k,\tilde{M}_n]=0,\,\,\tilde{L}_k=\tilde{B}_k+{\bf q}{\cal
M}_0D^{-1}{\bf r}^{\top}D,\,\,
\tilde{M}_n=\partial_{t_n}-\tilde{B}_n.
\end{equation}
where $\tilde{B}_k=D^k+\sum_{j=1}^{k-1}w_{j} D^j$.
For $k=1,2,3$, this leads to vector generalizations of the
Chen-Lee-Liu, the modified multi-component Yajima-Oikawa and
Melnikov hierarchies.

An essential extension of the k-cKP hierarchy is its
(2+1)-dimensional generalization \cite{MSS,6SSS}, given by
\begin{equation}\label{spa1}
\begin{array}{l}
 L_k=\beta_k\partial_{\tau_k}-B_k-{\bf q}{\cal M}_0D^{-1}{\bf r}^{\top}, \quad
 M_n=\alpha_n\partial_{t_n}-{A}_n, \\
 B_k=D^k+\sum_{j=0}^{k-2}u_jD^j, \quad
 {A}_n=D^n+\sum_{i=0}^{n-2}v_iD^i, \quad \\
 u_j=u_j(x,\tau_k,t_n),\,\,v_i=v_i(x,\tau_k,t_n), \quad
\alpha_n,\beta_k\in{\mathbb{C}},
\end{array}
\end{equation}
where $u_j$ and $v_i$ are scalar functions, ${\bf q}$ and ${\bf r}$
are $m$-component vector-functions. Lax equation $[L_k,M_n]=0$ is
equivalent to the system:
\begin{equation}\label{LMcom}
[L_k,M_n]_{\geq0}=0,M_n\{{\bf q}\}=0,\,\,M^{\tau}_n\{{\bf r}\}=0.
\end{equation}
System (\ref{LMcom}) can be rewritten as
\begin{equation}\nonumber
\alpha_{n} { {B_{k,t_n}}}=\beta_kA_{n,\tau_k}+[A_{n}, B_{k}] +
\left( [A_{n}, {\bf q} {\cal M}_0{D}^{-1} {\bf r}^{\top}]
\right)_{\geq0}, M_n\{{\bf q}\}=0,\,\,M^{\tau}_n\{{\bf r}\}=0.
\end{equation}
We list some members of this (2+1)-dimensional generalization of the
k-cKP hierarchy:
\renewcommand{\theenumi}{\arabic{enumi}}
\begin{enumerate}
\item $k=0,$ $n=2$.
\begin{equation}\label{L0M2}
{{L}_{0}}=\beta_0\partial_{\tau_0}-{\bf{q}}{{\mathcal{M}}_{0}}{{D}^{-1}}{{\bf{r}}^{\top
}},\,\,{{M}}_{2}=\alpha_2 {{\partial }_{{{t}_{2}}}}-{{D}^{2}}-v_0.
\end{equation}
The commutator equation $[L_0,M_{2}]=0$ is equivalent to the system:
\begin{equation}\label{DSnrnew}
\begin{array}{c}
 \alpha_2{\bf q}_{t_2}={\bf q}_{xx}+v_0{\bf q},\,\,
 -\alpha_2{\bf r}^{\top}_{t_2}={\bf r}^{\top}_{xx}+{\bf
 r}^{\top}v_0,\,\,
 \beta_0 v_{0,\tau_0}=-2({\bf q}{\cal M}_0{\bf r}^{\top})_x.
\end{array}
\end{equation}
 After the reduction $\beta_0\in{\mathbb{R}}$, $\alpha_2\in i{\mathbb{R}}$,
${\bf r}=\bar{\bf q}$, ${\cal M}_0={\cal M}^*_0$, $v_0=\bar{v}_0$,
the operators $L_1$ and $M_2$ in (\ref{L0M2}) are skew-Hermitian and
Hermitian, respectively, and (\ref{DSnrnew}) becomes the DS-III
system:
\begin{equation}\label{DSnrnew2x}
\begin{array}{c}
 \alpha_2{\bf q}_{t_2}={\bf q}_{xx}+v_0{\bf q},\,\,
 \beta_0 v_{0,\tau_0}=-2({\bf q}{\cal M}_0{\bf q}^{*})_x.
\end{array}
\end{equation}
\item
 $k=1$, $n=2$. Then (\ref{spa1}) has the form
\begin{equation}\label{L}
  L_1=\beta_{1} \partial_{\tau_{1}} - {D} -{\bf q}{\cal M}_0 {D}^{-1} {\bf r}^{\top}, \quad
 M_2=\alpha_{2} \partial_{t_{2}} - {D}^{2} - v_0,
\end{equation}
and the equation $\left[ L_1, M_2\right] =0,$ is equivalent to the
system,
\begin{equation}\label{Dsw}
\begin{array}{lcl}
 \alpha_{2} {\bf q}_{t_2}={\bf q}_{xx} + v_0{\bf q}, \quad
 \alpha_{2}  {\bf r}_{t_2}=-{\bf r}_{xx} - v_0{\bf r}, \\
\beta_{1} {v}_{0,\tau_1}={v}_{0,x} - 2 ({\bf q}{\cal M}_0{\bf
r}^{\top})_{x}.
\end{array}
\end{equation}
 After the reduction $\beta_1\in{\mathbb{R}}$, $\alpha_2\in i{\mathbb{R}}$,
${\bf r}=\bar{\bf q}$, ${\cal M}_0={\cal M}^*_0$, $v_0=\bar{v}_0$,
we obtain the DS-III system in the following form:
\begin{equation}\label{DeS}
\begin{array}{lcl}
\alpha_{2} {\bf q}_{t_2}={\bf q}_{xx} + v_0{\bf q},\quad \beta_{1}
v_{0,\tau_1}=v_{0,x} - 2 ({\bf q}{\cal M}_0{\bf q}^*)_{x}.
\end{array}
\end{equation}
In Remark \ref{remark1} we will show that DS-III systems
(\ref{DSnrnew2x}) and (\ref{DeS}) as well as all the other equations
of the hierarchy (\ref{spa1}) related to the operators $L_0$ and
$L_1$ in (\ref{spa1}) are equivalent via linear change of
independent variables.
\item
$k=1$, $n=3$. Now (\ref{spa1}) becomes
\begin{equation}\label{L1M3ea}
L_1=\beta_{1} \partial_{\tau_{1}} - {D} - {\bf q}{\cal M}_0 {D}^{-1}
{\bf r}^{\top}, \quad M_3=\alpha_{3}\partial_{t_{3}} -{D}^3- v_{1}
{D} - v_{0}.
\end{equation}
After the additional reduction $\alpha_3,\beta_1\in{\mathbb{R}}$,
${\cal M}_0={\cal M}^*_0$, $v_1=\bar{v}_1$, $\bar{v}_0+v_0=v_{1x}$,
the operators $L_1$, $M_3$ in (\ref{L1M3ea}) are skew-Hermitian, and
the Lax equation $[L_1,M_3]=0$ is equivalent to the following
(2+1)-dimensional generalization of the mKdV system:
\begin{equation}\label{mKDV}
\begin{array}{c}
\alpha_{3}{\bf q}_{t_3}  =  {\bf q}_{xxx} + v_{1}{\bf q}_{x} +
v_{0}{\bf q},\\ \beta_{1}  {v_{0,\tau_1}}
 =  {v_{0,x}} - 3 ({\bf q}_{x}{\cal M}_0{\bf q}^*)_{x}, \,\,
\beta_{1}  {v_{1,\tau_1}} = {v_{1,x}} - 3 ({\bf q}{\cal M}_0{\bf
q}^*)_{x}.
\end{array}
\end{equation}
This system admits the real version (${\cal M}^{\top}_0={\cal
M}^*_0$, ${\bf q}^*={\bf q}^{\top}$)
\begin{equation}\label{cmKDV}
\begin{array}{c}
\alpha_{3}{\bf q}_{t_3}  =  {\bf q}_{xxx} + v_{1}{\bf q}_{x} +
\frac12 v_{1,x}{\bf q},\,\,\,\, \beta_{1}  {v_{1,\tau_1}} =
{v_{1,x}} - 3 ({\bf q}{\cal M}_0{\bf q}^{\top})_{x}.
\end{array}
\end{equation}

\item
$k=2$, $n=2$. (\ref{spa1}) takes the form
\begin{equation}\label{L2M2y}
 L_2=\beta_{2} \partial_{\tau_{2}} - {D}^{2} - u_{0}
   - {\bf q}{\cal M}_0 {D}^{-1} {\bf r}^{\top}, \quad
 M_2=\alpha_{2} \partial_{t_{2}} - {D}^{2} - u_{0}.
\end{equation}
Under the reduction $\alpha_2,\beta_2\in i{\mathbb{R}}$, ${\bf
r}=\bar{\bf q}$, ${\cal M}_0=-{\cal M}_0^*$, $u_0=\bar{u}_0:=2u$,
the operators $L_2$ and $M_2$ in (\ref{L2M2y}) become Hermitian and
the Lax equation $[L_2,M_2]=0$ is equivalent to the following
system:
\begin{equation}\label{Yajima-Oikawa2+1}
\begin{array}{c}
\alpha_{2} {\bf q}_{t_2} =  {\bf q}_{xx} + 2u{\bf q},\qquad
\alpha_{2} u_{t_2} = \beta_{2}u_{\tau_{2}} + ({\bf q}{\cal M}_0{\bf
q}^*)_{x},
\end{array}
\end{equation}
which is a (2+1)-dimensional vector generalization of the
Yajima-Oikawa system.

\item $k=2$, $n=3$. Now (\ref{spa1}) becomes
\begin{equation}\label{398}
\begin{array}{c}
L_2=\beta_2\partial_{\tau_2}- D^2-2u_0-{\bf q}{\cal M}_0{D}^{-1}{\bf
r}^{\top}, \,\,
\\
M_3=\alpha_3\partial_{t_3}-D^3-3u_0D-\frac32
\left(u_{0,x}+\beta_2D^{-1}\{u_{0,\tau_2}\}+{\bf q}{\cal M}_0{\bf
r}^{\top}\right). \,\,\,
\end{array}
\end{equation}
In formula (\ref{398}), $D^{-1}\{u_{0,\tau_2}\}$ denotes indefinite
integral of the function $u_{0,\tau_2}$ with respect to $x$. With
the additional reduction $\beta_2\in i{\mathbb{R}}$,
$\alpha_3\in{\mathbb{R}}$, $u_0=\bar{u}_0:=u$, ${\cal M}_0=-{\cal
M}^*_0$ and ${\bf r}=\bar{\bf q}$, this is equivalent to the
following generalization of the higher Yajima-Oikawa system
\cite{PHD} (in scalar case ($m=1$) Manakov LAB-triad for this
equation was proposed in \cite{Mf}):
\begin{equation}\label{3100}
\begin{array}{l}
 \alpha_3{\bf q}_{t_3}\!=\!  {\bf q}_{xxx}\! + 3 u{\bf q}_x
 +\frac32\left(u_x+\beta_2D^{-1}\{u_{\tau_2}\}+{\bf q}{\cal M}_0{\bf q}^*\right){\bf q},\\
 \left[\alpha_3u_{t_3}-\frac14u_{xxx}-3uu_x
  +\frac34\right.  \left(\!{\bf q}{\cal M}_0{\bf q}_x^*-{\bf q}_x{\cal M}_0{\bf q}^*\!\right)_x
 \left.
 - \frac34 \beta_2\left({\bf q}{\cal M}_0{\bf
 q}^*\right)_{\tau_2}\right]_{x}\!
   =\!\frac34\beta_2^2\! u_{{\tau}_2{\tau}_2}.
\end{array}
\end{equation}
\item $k=3,n=3$. (\ref{spa1}) takes the form
\begin{equation}
L_3=\beta_3\partial_{\tau_3}-D^3-u_1D-\frac12 u_{1,x}-{\bf q}{\cal
M}_0D^{-1}{\bf r}^{\top},\quad
M_3=\alpha_3\partial_{t_3}-D^3-u_1D-\frac12u_{1,x}.
\end{equation}
With the additional reduction $\alpha_3,\beta_3\in{\mathbb{R}}$,
$u_1=\bar{u}_1:=u$, ${\cal M}_0={\cal M}^*_0=\bar{{\cal M}}_0$ and
${\bf r}=\bar{\bf q}={\bf q}$, this is equivalent to the following
(2+1)-dimensional generalization of the Drinfeld-Sokolov-Wilson
equation \cite{DS,WG,HGR}:
\begin{equation}\label{DSW}
\alpha_3{\bf q}_{t_3}={\bf q}_{xxx}+u{\bf q}_x+\frac12u_x{\bf q},\,
\alpha_3u_{t_3}=\beta_3u_{\tau_3}+3({\bf q}{\cal M}_0{\bf
q}^{\top})_x.
\end{equation}
\end{enumerate}
 The
following remark establishes connections between equations
represented by Lax pairs ($L_0$, $M_k$) and ($L_1$, $M_k$), ($L_k$,
$M_k$) and ($L_1$, $M_k$) respectively (see formulae (\ref{spa1})).
\begin{remark}\label{remark1}
The change of variables $\tilde{x}:=x-\frac{1}{\beta_0}\tau_0$,
$\tau_1 :=\frac{\beta_1}{\beta}_0\tau_0$, $\tilde{\bf
q}(\tilde{x},\tau_1):={\bf q}(x,\tau_0)$,
$\tilde{v}_0(\tilde{x},\tau_1): =v_0(x,\tau_0)$  maps DS-III
equation (\ref{DSnrnew2x})
to another form (\ref{DeS}) of it. The same change maps the "higher"
equations related to operator $L_0$ to the equations related to
$L_1$ (see formulae (\ref{spa1})).
\end{remark}
\begin{remark}
The change of variables
\begin{equation}
\begin{array}{l}
\tilde{x}:=x-\frac{1}{\beta_k}\tau_k,
{\tilde{t}}_k:=t_k+\frac{\alpha_k}{\beta_k}\tau_k,
\tau_1:=\frac{\beta_1}{\beta_k}\tau_k,
v_{k-2}({\tilde{x}},{\tilde{t}}_k,\tau_1):=u_{k-2}(x,t_k,\tau_k),\\
\tilde{{\bf q}}({\tilde{x}},{\tilde{t}}_k,\tau_1):={\bf
q}(x,t_k,\tau_k),
\end{array}
\end{equation}
 maps (2+1)-dimensional generalization of Yajima-Oikawa system (\ref{Yajima-Oikawa2+1}) to DS-III system (\ref{DeS})
  in the case $k=2$. In the case $k=3$
 it maps (2+1)-extension of Drinfeld-Sokolov-Wilson equation
 (\ref{DSW}) to the real version of (2+1)-extension of mKdV equation
 (\ref{cmKDV}). An analogous relation holds between all other equations
 with Lax pairs ($L_k$, $M_k$) and ($L_1$, $M_k$) where the operators $L_1$, $L_k$
 and $M_k$ are defined by formulae (\ref{spa1}).
\end{remark}

Thus, for $k=1$ we have the DS-III hierarchy (its first members are
DS-III ($k=1,n=2$) and a special (2+1)-dimensional extension of mKdV
($k=1,n=3$), see (\ref{DeS}) and (\ref{mKDV}). For $k=2$, $k=3$, we
have (2+1)-dimensional generalizations of the Yajima-Oikawa (in
particular, it contains (\ref{Yajima-Oikawa2+1}) and (\ref{3100}))
and the Melnikov hierarchy \cite{M4}, respectively.

\section{Bidirectional generalizations of (2+1)-dimensional k-constrained KP
hierarchy}\label{extended} In this section we introduce a new
generalization of  the (2+1)-dimensional k-constrained KP hierarchy
given by (\ref{spa1}) to the case of two integro-differential
operators. One of them (the operator $M_{n,l}$ in (\ref{ex2+1}))
generalizes the corresponding operator $M_n$ (\ref{spa1}) and
depends on two independent indices $l$ and $n$. It leads to
generalization of (2+1)-dimensional k-cKP hierarchy (\ref{spa1}) in
additional direction $l$ ($l=1,2,\ldots$). We do not consider the
case $l=0$ because of the Remark \ref{REM}. For further purposes we
will use the following well-known formulae for integral operator
$h_1D^{-1}h_2$ constructed by matrix-valued functions $h_1$ and
$h_2$ and the differential operator $A$ with matrix-valued
coefficients in the algebra of pseudodifferential
operators: 
\begin{equation}\label{Sydorenko:eq25}
Ah_1{\cal D}^{-1} h_2=(Ah_1{\cal D}^{-1} h_2)_{\geq0}+ A\{h_1\}{\cal
D}^{-1} h_2,
\end{equation}
\begin{equation}\label{Sydorenko:eq26}
h_1{\cal D}^{-1} h_2 A=(h_1{\cal D}^{-1} h_2 A)_{\geq0}+ h_1{\cal
D}^{-1} [A^\tau \{h_2^{\top}\}]^\top,
\end{equation}
\begin{equation}\label{Sydorenko:eq27}
h_1{\cal D}^{-1} h_2 h_3{\cal D}^{-1} h_4=h_1D^{-1}\{h_2h_3\}{\cal
D}^{-1} h_4 -h_1{\cal D}^{-1}D^{-1}\{h_2h_3\} h_4.
\end{equation}

Consider the following bidirectional generalization of
(2+1)-dimensional k-constrained KP hierarchy (\ref{spa1}):
\begin{equation}\label{ex2+1}
\begin{array}{l}
L_k=\beta_k\partial_{\tau_k}-B_k-{\bf q}{\cal M}_0D^{-1}{\bf
r}^{\top},\,\, B_k=\sum_{j=0}^{k}u_jD^j,u_j=u_j(x,\tau_k,t_n),\,
\beta_k\in{\mathbb{C}},\\
M_{n,l}=\alpha_n\partial_{t_n}-{A}_n-c_l\sum_{j=0}^l{\bf q}[j]{\cal
M}_0D^{-1}{\bf r}^{\top}[l-j],\,\,\, l=1,\ldots
\\{A}_n=\sum_{i=0}^{n}v_iD^i,
v_i=v_i(x,\tau_k,t_n), \alpha_n\in{\mathbb{C}},
\end{array}
\end{equation}
where $u_j$ and $v_i$, ${\bf q}$ and ${\bf r}$  are $N\times N$ and
$N\times M$ matrix functions respectively; ${\bf q}[j]$ and ${\bf
r}[j]$ are matrix functions of the following form: $
{\bf{q}}[j]:=(L_k)^j\{{\bf q}\},\,\, {\bf
r}^{\top}[j]:=((L_k^{\tau})^j\{{\bf r}\})^{\top}. $

The following theorem  holds.
\begin{theorem}\label{T1}
The Lax equation $[L_k,M_{n,l}]=0$ is equivalent to the system:
 \begin{equation}\label{fre}
 [L_k,M_{n,l}]_{\geq0}=0,M_{n,l}\{{\bf q}\}=c_l(L_k)^{l+1}\{{\bf q}\},\,M_{n,l}^{\tau}\{{\bf{r}}\}=c_l(L_k^{\tau})^{l+1}\{{\bf{r}}\}.
 \end{equation}
 \end{theorem}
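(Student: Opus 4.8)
The plan is to expand the commutator $[L_k,M_{n,l}]$ and sort the resulting pseudodifferential operator by orders of $D$. Write $L_k=\beta_k\partial_{\tau_k}-B_k-{\bf q}{\cal M}_0D^{-1}{\bf r}^{\top}$ and $M_{n,l}=\alpha_n\partial_{t_n}-A_n-c_l\sum_{j=0}^l{\bf q}[j]{\cal M}_0D^{-1}{\bf r}^{\top}[l-j]$. The operator $[L_k,M_{n,l}]$ is a priori an integro-differential operator whose negative-order tail lives in the space spanned by terms of the form $h_1D^{-1}h_2$. The strategy is: (i) compute the $\geq 0$ part, which gives the first equation of (\ref{fre}) directly; (ii) show that the $D^{-1}$-tail of $[L_k,M_{n,l}]$ can be written, using the identities (\ref{Sydorenko:eq25})--(\ref{Sydorenko:eq27}), purely in terms of the quantities $M_{n,l}\{{\bf q}\}-c_l(L_k)^{l+1}\{{\bf q}\}$ and $M_{n,l}^{\tau}\{{\bf r}\}-c_l(L_k^{\tau})^{l+1}\{{\bf r}\}$, so that vanishing of the tail is equivalent to the second and third equations of (\ref{fre}).

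First I would isolate the genuinely new mechanism, which is the telescoping of the sum $\sum_{j=0}^l{\bf q}[j]{\cal M}_0D^{-1}{\bf r}^{\top}[l-j]$ against the reduction term ${\bf q}{\cal M}_0D^{-1}{\bf r}^{\top}$ of $L_k$. Recall ${\bf q}[j]=(L_k)^j\{{\bf q}\}$ and ${\bf r}^{\top}[j]=((L_k^{\tau})^j\{{\bf r}\})^{\top}$. The key sub-computation is to commute $L_k$ past a single integral term ${\bf q}[j]{\cal M}_0D^{-1}{\bf r}^{\top}[l-j]$. Using (\ref{Sydorenko:eq25}) with $A$ replaced by the differential part $B_k$ and the $\partial_{\tau_k}$ part handled by the product rule, and using (\ref{Sydorenko:eq26}) on the right, the commutator $[L_k,\,{\bf q}[j]{\cal M}_0D^{-1}{\bf r}^{\top}[l-j]]$ produces (a) a purely differential ($\geq 0$) piece, (b) a term $-(L_k\{{\bf q}[j]\}){\cal M}_0D^{-1}{\bf r}^{\top}[l-j]=-{\bf q}[j+1]{\cal M}_0D^{-1}{\bf r}^{\top}[l-j]$, (c) a term $+{\bf q}[j]{\cal M}_0D^{-1}{\bf r}^{\top}[l-j+1]$ coming from the adjoint action on the right, and (d) cross terms of the type handled by (\ref{Sydorenko:eq27}) coming from composing the two integral operators ${\bf q}{\cal M}_0D^{-1}{\bf r}^{\top}$ and ${\bf q}[j]{\cal M}_0D^{-1}{\bf r}^{\top}[l-j]$. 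When we sum (b) and (c) over $j=0,\dots,l$ the sum telescopes: the $D^{-1}$-tails cancel in pairs except for the extreme terms $-{\bf q}[l+1]{\cal M}_0D^{-1}{\bf r}^{\top}[0]$ and $+{\bf q}[0]{\cal M}_0D^{-1}{\bf r}^{\top}[l+1]$, i.e.\ $-((L_k)^{l+1}\{{\bf q}\}){\cal M}_0D^{-1}{\bf r}^{\top}+{\bf q}{\cal M}_0D^{-1}((L_k^{\tau})^{l+1}\{{\bf r}\})^{\top}$. I also need to verify that the quadratic cross-terms from (\ref{Sydorenko:eq27}) cancel, which is exactly the Gel'fand--Dickey-type identity that makes $B_k+{\bf q}{\cal M}_0D^{-1}{\bf r}^{\top}$ a consistent constrained operator; this should follow by the same bookkeeping already implicit in the $l=0$ case (\ref{spa1})--(\ref{LMcom}) of the earlier hierarchy.

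Next I would assemble the pieces. After the telescoping, the full $D^{-1}$-tail of $[L_k,M_{n,l}]$ equals
\begin{equation}\nonumber
-\bigl(M_{n,l}\{{\bf q}\}-c_l(L_k)^{l+1}\{{\bf q}\}\bigr){\cal M}_0D^{-1}{\bf r}^{\top}+{\bf q}{\cal M}_0D^{-1}\bigl(M_{n,l}^{\tau}\{{\bf r}\}-c_l(L_k^{\tau})^{l+1}\{{\bf r}\}\bigr)^{\top},
\end{equation}
where the $\alpha_n\partial_{t_n}$ and $A_n$ contributions from $M_{n,l}$ have been folded into $M_{n,l}\{{\bf q}\}$ and $M_{n,l}^{\tau}\{{\bf r}\}$ via (\ref{Sydorenko:eq25})--(\ref{Sydorenko:eq26}) applied with $A=A_n$ (and the derivation $\partial_{t_n}$ handled by Leibniz). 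Since an operator of the form $h_1{\cal M}_0D^{-1}h_2^{\top}+{\bf q}{\cal M}_0D^{-1}g^{\top}$ with $h_2={\bf r}$ fixed and ${\bf q}$ fixed vanishes identically iff $h_1=0$ and $g=0$ (the two $D^{-1}$-kernels are linearly independent as pseudodifferential operators — here one should note the standard fact that $fD^{-1}g=0$ as an operator forces $f=0$ or $g=0$, and that the two summands cannot cancel each other because the left and right factors differ), the vanishing of $[L_k,M_{n,l}]$ splits into the three stated equations. The main obstacle, and the step deserving the most care, is (d): checking that \emph{all} quadratic $D^{-1}$ cross-terms produced by (\ref{Sydorenko:eq27}) — both those from $[B_k\text{-tail},M_{n,l}\text{-tail}]$ and from $[L_k\text{-tail},A_n\text{-tail}]$ — either cancel among themselves or get absorbed into $M_{n,l}\{{\bf q}\}$, $M_{n,l}^{\tau}\{{\bf r}\}$ and the $(L_k)^{l+1}$, $(L_k^{\tau})^{l+1}$ terms; this is a bilinear identity in ${\bf q},{\bf r}$ and their $L_k$-iterates, and although each application of (\ref{Sydorenko:eq27}) is routine, keeping the index bookkeeping of the double sum $\sum_{j}$ straight is where an error would hide. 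The converse direction (that (\ref{fre}) implies $[L_k,M_{n,l}]=0$) is immediate by reading the same identity backwards.
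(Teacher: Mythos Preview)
Your plan is essentially the paper's proof: decompose $[L_k,M_{n,l}]_{<0}$ by bilinearity into the three pieces $c_l\sum_j[{\bf q}[j]{\cal M}_0D^{-1}{\bf r}^{\top}[l-j],\,\beta_k\partial_{\tau_k}-B_k]_{<0}$, $c_l\sum_j[{\bf q}{\cal M}_0D^{-1}{\bf r}^{\top},\,{\bf q}[j]{\cal M}_0D^{-1}{\bf r}^{\top}[l-j]]_{<0}$, and $[\alpha_n\partial_{t_n}-A_n,\,{\bf q}{\cal M}_0D^{-1}{\bf r}^{\top}]_{<0}$, apply (\ref{Sydorenko:eq25})--(\ref{Sydorenko:eq27}) to each, and telescope. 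On your flagged obstacle (d): the quadratic cross-terms from (\ref{Sydorenko:eq27}) do \emph{not} cancel among themselves but are exactly the integral-operator contributions needed to upgrade $(\beta_k\partial_{\tau_k}-B_k)\{{\bf q}[j]\}$ to the full $L_k\{{\bf q}[j]\}={\bf q}[j+1]$ (remember $L_k$ itself carries $-{\bf q}{\cal M}_0D^{-1}{\bf r}^{\top}$), and likewise on the right to produce ${\bf r}^{\top}[l-j+1]$; once you see this the bookkeeping is immediate, and apart from an overall sign slip in your assembled tail (the paper obtains $+(M_{n,l}-c_l L_k^{l+1})\{{\bf q}\}{\cal M}_0D^{-1}{\bf r}^{\top}-{\bf q}{\cal M}_0D^{-1}((M_{n,l}^{\tau}-c_l(L_k^{\tau})^{l+1})\{{\bf r}\})^{\top}$) your argument coincides with the paper's.
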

 \begin{proof}
 From the equality $[L_k,M_{n,l}]=[L_k,M_{n,l}]_{\geq0}+[L_k,M_{n,l}]_{<0}$ we obtain that the Lax equation $[L_k,M_{n,l}]=0$ is equivalent to the following one:
\begin{equation}
[L_k,M_{n,l}]_{\geq0}=0,\,\,[L_k,M_{n,l}]_{<0}=0.
\end{equation}
Thus, it is sufficient to prove that
$[L_k,M_{n,l}]_{<0}=0\Longleftrightarrow$
$(M_{n,l}-c_l(L_k)^{l+1})\{{\bf
q}\}=(M_{n,l}^{\tau}-c_l(L_k^{\tau})^{l+1})\{{\bf r}\}=0$. Using
bi-linearity of the commutator and explicit form (\ref{ex2+1}) of
the operators $L_k$ and $M_{n,l}$ we obtain:
\begin{equation}\label{Frt}
\begin{array}{c}
[L_k,M_{n,l}]_{<0}=c_l\sum_{j=0}^l[{\bf q}[j]{\cal M}_0D^{-1}{\bf
r}^{\top}[l-j],\beta_k\partial_{\tau_k}-B_k]_{<0}+\\+c_l\sum_{j=0}^l[{\bf
q}{\cal M}_0D^{-1}{\bf r}^{\top},{\bf q}[j]{\cal M}_0D^{-1}{\bf
r}^{\top}[l-j]]_{<0}+[\alpha_n\partial_{t_n}-A_n,{\bf q}{\cal
M}_0D^{-1}{\bf r}^{\top}]_{<0}.
\end{array}
\end{equation}

After direct computations of each of the three items on the
right-hand side of formula (\ref{Frt}) we obtain:

\begin{enumerate}
\item
\begin{equation}\label{equ1}
\begin{array}{c}
c_l\sum_{j=0}^l[{\bf q}[j]{\cal M}_0D^{-1}{\bf
r}^{\top}[l-j],\beta_k\partial_{\tau_k}-B_k]_{<0}=-c_l\sum_{j=0}^l\left(\beta_k{\bf
q}_{\tau_k}[j]-B_k\{{\bf q}[j]\}\right)\cdot\\\cdot{\cal
M}_0D^{-1}{\bf r}^{\top}[l-j]-c_l\sum_{j=0}^l{\bf q}[j]{\cal
M}_0D^{-1}\left(\beta_k{\bf r}^{\top}_{\tau_k}[l-j]+B_k^{\tau}\{{\bf
r}^{\top}[l-j]\}\right).
\end{array}
\end{equation}
Equality (\ref{equ1}) is a consequence of formulae
(\ref{Sydorenko:eq25})-(\ref{Sydorenko:eq26}).
\item

\begin{equation}\label{item2}
\begin{array}{c}
c_l\sum_{j=0}^l[{\bf q}{\cal M}_0D^{-1}{\bf r}^{\top},{\bf
q}[j]{\cal M}_0D^{-1}{\bf r}^{\top}[l-j]]_{<0}=c_l\sum_{j=0}^l {\bf
q}{\cal M}_0D^{-1}\{{\bf r}^{\top}{\bf q}[j]\}\times\\\times{\cal
M}_0D^{-1}{\bf r}^{\top}[l-j]-c_l\sum_{j=0}^l {\bf q}{\cal
M}_0D^{-1}D^{-1}\{{\bf r}^{\top}{\bf q}[j]\}{\cal M}_0{\bf
r}^{\top}[l-j]-\\- c_l\sum_{j=0}^l {\bf q}[j]{\cal M}_0D^{-1}\{{\bf
r}^{\top}[l-j] {\bf q}\}{\cal M}_0D^{-1}{\bf
r}^{\top}+\\+c_l\sum_{j=0}^l {\bf q}[j]{\cal M}_0D^{-1}D^{-1}\{{\bf
r}^{\top}[l-j] {\bf q}\}{\cal M}_0{\bf r}^{\top}.
\end{array}
\end{equation}
Formula (\ref{item2}) follows from (\ref{Sydorenko:eq27}).
\item

\begin{equation}\label{dsa}
\begin{array}{l}
[\alpha_n\partial_{t_n}-A_n,{\bf q}{\cal M}_0D^{-1}{\bf
r}^{\top}]_{<0}=\left(\alpha_n{\bf q}_{t_n}-A_n\{{\bf
q}\}\right){\cal M}_0D^{-1}{\bf r}^{\top}+\\+{\bf q}{\cal
M}_0D^{-1}\left(\alpha_n{\bf r}^{\top}_{t_n}+(A_n^{\tau}\{{\bf
r}\})^{\top}\right).
\end{array}
\end{equation}
The latter equality is obtained via
(\ref{Sydorenko:eq25})-(\ref{Sydorenko:eq26}).
\end{enumerate}
From formulae (\ref{Frt})-(\ref{dsa}) we have
\begin{equation}
\begin{array}{l}
[L_k,M_{n,l}]_{<0}=c_l\left(\!\sum_{j=0}^l{{\bf q}}[j]{\cal
M}_0D^{-1}L_k^{\tau}\{{\bf r}[l-j]\}-\sum_{j=0}^lL_k\{{\bf
q}[j]\}{\cal M}_0D^{-1}{\bf
r}^{\top}[l-j]\!\right)\!+\\+M_{n,l}\{{\bf q}\}{\cal M}_0D^{-1}{\bf
r}^{\top}-{\bf q}{\cal M}_0D^{-1} (M_{n,l}^{\tau}\{{\bf
r}\})^{\top}=c_l\sum_{j=0}^l{\bf q}[j]{\cal M}_0D^{-1}{\bf
r}^{\top}[l-j+1]-\\-c_l\sum_{j=0}^l{\bf q}[j+1]{\cal M}_0D^{-1}{\bf
r}^{\top}[l-j]+M_{n,l}\{{\bf q}\}{\cal M}_0D^{-1}{\bf r}^{\top}-{\bf
q}{\cal M}_0D^{-1}(M^{\tau}_{n,l}\{{\bf
r}\})^{\top}=\\=(M_{n,l}-c_l(L_k)^{l+1})\{{\bf q}\}{\cal
M}_0D^{-1}{\bf r}^{\top}-{\bf q}{\cal
M}_0D^{-1}((M_{n,l}^{\tau}-c_l(L_k^{\tau})^{l+1})\{{\bf
r}\})^{\top}.
\end{array}
\end{equation}
From the last equality we obtain the equivalence of the equation
$[L_k,M_{n,l}]=0$ and (\ref{fre}).
\end{proof}
New hierarchy (\ref{ex2+1}) consists of several special cases:
\begin{enumerate}
\item $\beta_k=0$, $c_l=0$. Under this assumption we obtain Matrix
k-constrained KP hierarchy \cite{Oevel93}. We shall also point out
that the case $\beta_k=0$ and $c_l\neq0$ also reduces to Matrix
k-constrained KP hierarchy.
\item $\alpha_n=0$. In this case we obtain a new (1+1)-dimensional Matrix
k-constrained KP hierarchy \cite{NEW}. Their members are stationary
with respect to $t_n$  members of (2+1)-BDk-cKP hierarchy
(\ref{ex2+1}).
\item $c_l=0$, $N=1$, $v_n=u_k=1$, $v_{n-1}=u_{k-1}=0$. In this case we obtain (2+1)-dimensional k-cKP hierarchy
given by (\ref{spa1}).
\item $n=0$. In this case the differential part of ${M}_{0,l}$ in
(\ref{ex2+1}) is equal to zero: $A_0=0$. We obtain a new
generalization of DS-III hierarchy.
\item $c_l=0$. We obtain $(t_A,\tau_B)$-Matrix KP Hierarchy
that was investigated in \cite{Zeng}.
\item If we set $l=0$ we obtain $(\gamma_A,\sigma_B)$-Matrix KP
hierarchy that was investigated in \cite{YYQB}. However, we do not
include $l=0$ in formula (\ref{ex2+1}) because of the following
remark.
\begin{remark}\label{REM}
In the case $l=0$ we obtain:
\begin{equation}\label{AS}
M_{n,0}=\alpha_n\partial_{t_n}-{A}_n-c_0{\bf q}{\cal M}_0D^{-1}{\bf
r}^{\top}=(\alpha_n\partial_{t_n}-c_0\beta_k\partial_{\tau_k})-({A}_n-c_0B_k)+c_0L_k.
\end{equation}
The last summand in the latter formula can be ignored because $L_k$
commutes with itself. Thus, we obtained an evolution differential
operator again. It means that in the case $l=0$ the hierarchy given
by the operators $M_{n,0}$ and $L_k$ after the change of independent
variables
($\alpha_n\partial_{t_n}-c_0\beta_k\partial_{\tau_k}\rightarrow
\tilde{\alpha}_n\partial_{\tilde{t}_n}$) coincides with hierarchy
(\ref{spa1}) in the matrix case.

\end{remark}

\end{enumerate}

 The following corollary follows from Theorem \ref{T1}:
\begin{corollary}
The Lax equation $[L_k,\tilde{M}_{n,l}]=0$, where
\begin{equation}\label{tM}
\tilde{M}_{n,l}=M_{n,l}-c_l(L_k)^{l+1},
\end{equation}
 and the operators $L_k$ and $M_n$ are defined by (\ref{ex2+1}), is equivalent to the system:
 \begin{equation}\label{frex}
 [L_k,\tilde{M}_{n,l}]_{\geq0}=0,\tilde{M}_{n,l}\{{\bf q}\}=0,\,\tilde{M}_{n,l}^{\tau}\{{\bf{r}}\}=0.
 \end{equation}
\end{corollary}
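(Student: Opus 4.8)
The plan is to derive the corollary directly from Theorem~\ref{T1} by substituting $\tilde{M}_{n,l}=M_{n,l}-c_l(L_k)^{l+1}$ into the three equations of system~(\ref{fre}). First I would observe that, since $L_k$ commutes with itself, $(L_k)^{l+1}$ commutes with $L_k$, so
\begin{equation}\nonumber
[L_k,\tilde{M}_{n,l}]=[L_k,M_{n,l}]-c_l[L_k,(L_k)^{l+1}]=[L_k,M_{n,l}].
\end{equation}
Hence the Lax equation $[L_k,\tilde{M}_{n,l}]=0$ is literally the same as $[L_k,M_{n,l}]=0$, and by Theorem~\ref{T1} it is equivalent to~(\ref{fre}). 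It remains to rewrite~(\ref{fre}) in terms of $\tilde{M}_{n,l}$.

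Next I would treat the three components of~(\ref{fre}) in turn. For the operator equation, note that $[L_k,(L_k)^{l+1}]=0$ holds identically as a pseudodifferential operator, so in particular its projection onto the non-negative part vanishes, and therefore $[L_k,\tilde{M}_{n,l}]_{\geq0}=[L_k,M_{n,l}]_{\geq0}$; thus the first equation of~(\ref{fre}) becomes $[L_k,\tilde{M}_{n,l}]_{\geq0}=0$. For the second equation, by definition of the action of operators on functions we have $\tilde{M}_{n,l}\{{\bf q}\}=M_{n,l}\{{\bf q}\}-c_l(L_k)^{l+1}\{{\bf q}\}$, so the reduction $M_{n,l}\{{\bf q}\}=c_l(L_k)^{l+1}\{{\bf q}\}$ is exactly $\tilde{M}_{n,l}\{{\bf q}\}=0$. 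The third equation is handled symmetrically: since $(\tilde{M}_{n,l})^{\tau}=M_{n,l}^{\tau}-c_l(L_k^{\tau})^{l+1}$ (using $((L_k)^{l+1})^{\tau}=(L_k^{\tau})^{l+1}$), the condition $M_{n,l}^{\tau}\{{\bf r}\}=c_l(L_k^{\tau})^{l+1}\{{\bf r}\}$ is precisely $\tilde{M}_{n,l}^{\tau}\{{\bf r}\}=0$. Combining these three observations yields the equivalence of $[L_k,\tilde{M}_{n,l}]=0$ with system~(\ref{frex}).

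I do not anticipate a genuine obstacle here; the corollary is essentially a change of viewpoint rather than a new result. The only point requiring a line of care is the bookkeeping identity $((L_k)^{l+1})^{\tau}=(L_k^{\tau})^{l+1}$ (equivalently that transposition is an anti-automorphism of the pseudodifferential algebra, so it reverses the order in a power but a single factor raised to a power is unaffected up to this reversal), which justifies passing the transpose through the correction term in the third equation and is also implicitly used in defining ${\bf r}^{\top}[j]=((L_k^{\tau})^j\{{\bf r}\})^{\top}$ in~(\ref{ex2+1}). Once that is noted, the proof is a three-line substitution.
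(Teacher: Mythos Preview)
Your proposal is correct and follows essentially the same approach as the paper: both use $[L_k,(L_k)^{l+1}]=0$ and bilinearity of the commutator to reduce $[L_k,\tilde{M}_{n,l}]=0$ to $[L_k,M_{n,l}]=0$, then invoke Theorem~\ref{T1}. The paper's proof simply stops after citing Theorem~\ref{T1}, whereas you additionally spell out the routine substitution that converts each equation of~(\ref{fre}) into the corresponding equation of~(\ref{frex}); this extra detail is fine and arguably clearer.
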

\begin{proof}
It is evident that the operator $L_k$ commutes with the natural
powers of itself: $[L_k,(L_k)^{l+1}]=0$, $l\in{\mathbb{N}}$. Thus by
bilinearity of the commutator  we obtain that
$[L_k,\tilde{M}_{n,l}]=0$ holds if and only if $[L_k,M_{n,l}]=0$. It
remains to use Theorem \ref{T1} in order to complete the proof.
\end{proof}
As a result of the hierarchy given by operators (\ref{ex2+1}) and a
bilinearity of the commutator we obtain the following essential
generalization of (\ref{ex2+1}):
\begin{equation}\label{ex2+2}
\begin{array}{l}
L_k=\beta_k\partial_{\tau_k}-B_k-{\bf q}{\cal M}_0D^{-1}{\bf
r}^{\top},\,\, B_k=\sum_{j=0}^{k}u_jD^j,u_j=u_j(x,\tau_k,t_n),\,
\beta_k\in{\mathbb{C}},\\
\!P_{n,m}\!=\!\alpha_n\partial_{t_n}\!-\!{A}_n-\sum_{l=1}^mc_l\left(\left(\sum_{j=0}^l{\bf
q}[j]{\cal
M}_0D^{-1}{\bf r}^{\top}[l-j]\right)+(L_k)^{l+1}\right),\,\,\, m=1,2,\ldots
\\{A}_n=\sum_{i=0}^{n}v_iD^i,
v_i=v_i(x,\tau_k,t_n), \alpha_n\in{\mathbb{C}},
\end{array}
\end{equation}
and the following corollary that is immediate consequence of Theorem
\ref{T1}:
\begin{corollary}
The commutator equation $[L_k,P_{n,m}]=0$ is equivalent to the
following system:
\begin{equation}\label{frexxx}
 [L_k,P_{n,m}]_{\geq0}=0,P_{n,m}\{{\bf q}\}=0,\,P_{n,m}^{\tau}\{{\bf{r}}\}=0.
 \end{equation}
\end{corollary}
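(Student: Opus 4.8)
The plan is to reduce the statement about $P_{n,m}$ to the already-established Theorem~\ref{T1} by exploiting bilinearity of the commutator together with the observation, already used in the preceding corollary, that $L_k$ commutes with every natural power of itself. Writing out the definition (\ref{ex2+2}), the operator $P_{n,m}$ decomposes as
\begin{equation}\label{decomp}
P_{n,m}=\Bigl(\alpha_n\partial_{t_n}-{A}_n\Bigr)-\sum_{l=1}^m c_l\Bigl(\sum_{j=0}^l{\bf q}[j]{\cal M}_0D^{-1}{\bf r}^{\top}[l-j]\Bigr)-\sum_{l=1}^m c_l(L_k)^{l+1}.
\end{equation}
The last sum commutes with $L_k$ identically, so $[L_k,P_{n,m}]=0$ is equivalent to $[L_k,\,\alpha_n\partial_{t_n}-{A}_n-\sum_{l=1}^m c_l\sum_{j=0}^l{\bf q}[j]{\cal M}_0D^{-1}{\bf r}^{\top}[l-j]]=0$.

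First I would note that each summand indexed by $l$ in the remaining double sum is precisely the integro-differential part of $M_{n,l}$ from (\ref{ex2+1}); that is, $M_{n,l}=\alpha_n\partial_{t_n}-A_n-c_l\sum_{j=0}^l{\bf q}[j]{\cal M}_0D^{-1}{\bf r}^{\top}[l-j]$, and $\tilde{M}_{n,l}=M_{n,l}-c_l(L_k)^{l+1}$ as in (\ref{tM}). A clean bookkeeping device is to write $P_{n,m}=\sum_{l=1}^m \tilde{M}_{n,l}-(m-1)(\alpha_n\partial_{t_n}-A_n)$, since each $\tilde{M}_{n,l}$ carries its own copy of the differential part $\alpha_n\partial_{t_n}-A_n$ and we must subtract the $m-1$ redundant copies. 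Then, by bilinearity,
\begin{equation}\label{comm-split}
[L_k,P_{n,m}]=\sum_{l=1}^m[L_k,\tilde{M}_{n,l}]-(m-1)[L_k,\alpha_n\partial_{t_n}-A_n].
\end{equation}
Here the subtlety is that $A_n$, $\alpha_n$ and the functions ${\bf q}$, ${\bf r}$, $u_j$ entering $L_k$ are the \emph{same} across all $l$, so this algebraic decomposition is legitimate; the only thing one needs is that the single scalar data $({\bf q},{\bf r},u_j,v_i,{\cal M}_0)$ simultaneously plays the role of the data in each $\tilde{M}_{n,l}$.

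Next I would compute $[L_k,P_{n,m}]_{<0}$ directly, mirroring the computation in the proof of Theorem~\ref{T1}: the differential-operator piece $\alpha_n\partial_{t_n}-A_n$ contributes, via (\ref{Sydorenko:eq25})--(\ref{Sydorenko:eq26}), the term $(\alpha_n{\bf q}_{t_n}-A_n\{{\bf q}\}){\cal M}_0D^{-1}{\bf r}^{\top}+{\bf q}{\cal M}_0D^{-1}(\alpha_n{\bf r}^{\top}_{t_n}+(A_n^{\tau}\{{\bf r}\})^{\top})$ exactly once (coming from the single ${\bf q}{\cal M}_0D^{-1}{\bf r}^{\top}$ inside $L_k$), while the telescoping argument from Theorem~\ref{T1}, applied to each $l$, converts $\sum_{j=0}^l{\bf q}[j]{\cal M}_0D^{-1}{\bf r}^{\top}[l-j]$ into the shifted differences yielding the $(L_k)^{l+1}$ contributions. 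Collecting everything, one arrives at
\begin{equation}\label{final-neg}
[L_k,P_{n,m}]_{<0}=P_{n,m}\{{\bf q}\}\,{\cal M}_0D^{-1}{\bf r}^{\top}-{\bf q}\,{\cal M}_0D^{-1}\bigl(P_{n,m}^{\tau}\{{\bf r}\}\bigr)^{\top},
\end{equation}
where the $(L_k)^{l+1}$ pieces of $P_{n,m}$ are exactly what cancels the telescoped residues so that the right-hand side comes out in terms of $P_{n,m}$ itself rather than $M_{n,l}-c_l(L_k)^{l+1}$ as it did in Theorem~\ref{T1}. Since ${\cal M}_0D^{-1}$ acting between independent matrix functions gives an operator that vanishes iff its coefficient functions vanish, (\ref{final-neg}) shows $[L_k,P_{n,m}]_{<0}=0\iff P_{n,m}\{{\bf q}\}=0$ and $P_{n,m}^{\tau}\{{\bf r}\}=0$, and combined with the trivial splitting $[L_k,P_{n,m}]=[L_k,P_{n,m}]_{\geq0}+[L_k,P_{n,m}]_{<0}$ this gives (\ref{frexxx}).

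The main obstacle I anticipate is purely organizational rather than conceptual: keeping track of the multiple copies of the differential part $\alpha_n\partial_{t_n}-A_n$ when expressing $P_{n,m}$ through the $\tilde{M}_{n,l}$'s (equivalently, making sure the single-vs-multiple counting in (\ref{comm-split}) is consistent with the single occurrence of ${\bf q}{\cal M}_0D^{-1}{\bf r}^{\top}$ in $L_k$), and verifying that the telescoping identity $\sum_{j=0}^l\bigl({\bf q}[j]{\cal M}_0D^{-1}{\bf r}^{\top}[l-j+1]-{\bf q}[j+1]{\cal M}_0D^{-1}{\bf r}^{\top}[l-j]\bigr)={\bf q}{\cal M}_0D^{-1}{\bf r}^{\top}[l+1]-{\bf q}[l+1]{\cal M}_0D^{-1}{\bf r}^{\top}$ assembles correctly into the $c_l(L_k)^{l+1}\{{\bf q}\}$ and $c_l(L_k^{\tau})^{l+1}\{{\bf r}\}$ terms across the full sum over $l$. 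Once that bookkeeping is in place, every analytic step is already furnished by (\ref{Sydorenko:eq25})--(\ref{Sydorenko:eq27}) and by the computation in Theorem~\ref{T1}, so the cleanest exposition is simply to invoke Theorem~\ref{T1} (or the preceding corollary) termwise and combine via bilinearity, rather than redoing the residue calculation from scratch.
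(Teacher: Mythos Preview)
Your proposal is correct and aligns with the paper's approach: the paper does not spell out a proof at all, merely stating that the corollary is an ``immediate consequence of Theorem~\ref{T1}'' together with bilinearity of the commutator, which is exactly the strategy you implement (and you further supply the bookkeeping identity $P_{n,m}=\sum_{l=1}^m\tilde{M}_{n,l}-(m-1)(\alpha_n\partial_{t_n}-A_n)$ and the resulting formula for $[L_k,P_{n,m}]_{<0}$ that the paper leaves implicit). Your closing remark that the cleanest route is to invoke Theorem~\ref{T1} termwise via bilinearity is precisely the paper's intended argument.
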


 For further convenience we will consider the Lax pairs
consisting of the operators $L_k$ (\ref{ex2+1}) and
$\tilde{M}_{n,l}$ (\ref{tM}) (the operator $\tilde{M}_{n,l}$ is
involved in equations for functions ${\bf q}$ and ${\bf r}$; see
formulae (\ref{frex})).
 Consider examples of equations given by operators $L_k$
 (\ref{ex2+1}) and $\tilde{M}_{n,l}$ (\ref{tM})
that can be obtained under certain choice of $(k,n,l)$.



 I. $k=0.$

In this case the operator $L_0$ (\ref{ex2+1}) has the form:
\begin{equation}
{{L}_{0}}=\beta_0\partial_{\tau_0}-{\bf{q}}{{\mathcal{M}}_{0}}{{D}^{-1}}{{\bf{r}}^{\top
}}.
\end{equation}
For further simplicity we use the change of variables $\tau_0:=y$,
$\beta:=\beta_0$:
\begin{equation}\label{L0}
{{L}_{0}}=\beta\partial_{y}-{\bf{q}}{{\mathcal{M}}_{0}}{{D}^{-1}}{{\bf{r}}^{\top
}}.
\end{equation}
\begin{enumerate}
%
\item $n=2$, $l=1$.
\begin{equation}\label{1eq}
\begin{array}{l}
\tilde{{M }}_{2,1}=M_{2,1}-c_1(L_0)^2=\alpha_2 {{\partial
}_{{{t}_{2}}}}-c({{D}^{2}}+v_0)-c_1\beta^2\partial^2_y+2c_1\beta{\bf
q}{\cal M}_0D^{-1}{\bf r}^{\top}_y+\\+2c_1\beta{\bf q}{\cal
M}_0D^{-1}{\bf r}^{\top}\partial_y,\,\,c\in{\mathbb{C}}.
\end{array}
\end{equation}
If $c=0$ we obtain that $\tilde{M}_{2,1}=\tilde{M}_{0,1}$. We use
$c$ in formula (\ref{1eq}) in order not to consider separately the
case $n=0$. The commutator equation $[L_0,\tilde{M}_{2,1}]=0$ is
equivalent to the system:
\begin{equation}\label{DSnr}
\begin{array}{c}
 \alpha_2{\bf q}_{t_2}=c{\bf q}_{xx}+c_1\beta^2{\bf q}_{yy}+cv_0{\bf q}+c_1{\bf
 q}{\cal M}_0S,\\
 -\alpha_2{\bf r}^{\top}_{t_2}=c{\bf r}^{\top}_{xx}+c_1\beta^2{\bf r}^{\top}_{yy}+c{\bf
 r}^{\top}v_0+c_1S{\cal M}_0{\bf r}^{\top},\\
 \beta v_{0y}=-2({\bf q}{\cal M}_0{\bf r}^{\top})_x,\,\,S_{x}=-2\beta({\bf r}^{\top}{\bf q})_y.
\end{array}
\end{equation}
The Lax pair $L_0$ (\ref{L0}) and $\tilde{M}_{2,1}$ (\ref{1eq}) and
the corresponding system (\ref{DSnr}) were investigated in \cite{n}.
Consider additional reductions of pair of the operators $L_0$
(\ref{L0}) and $\tilde{M}_{2,1}$ (\ref{1eq}) and system
(\ref{DSnr}). After the reduction $c,c_1\in\mathbb{R}$,
$\beta\in{\mathbb{R}}$, $\alpha_2\in i{\mathbb{R}}$; ${\bf
r}^{\top}={\bf q}^*$, ${\cal M}_0={\cal M}_0^*$, the operators $L_0$
and $\tilde{M}_{2,1}$ are skew-Hermitian and Hermitian respectively,
and (\ref{DSnr}) takes the form
\begin{equation}\label{DS}
\begin{array}{c}
 \alpha_2{\bf q}_{t_2}=c{\bf q}_{xx}+c_1\beta^2{\bf q}_{yy}+cv_0{\bf q}+c_1{\bf
 q}{\cal M}_0S,\\
 \beta v_{0y}=-2({\bf q}{\cal M}_0{\bf q}^*)_x,\,\,S_{x}=-2\beta({\bf q}^*{\bf q})_y.
\end{array}
\end{equation}

This has the following two interesting subcases:
\renewcommand{\theenumi}{\arabic{enumi}}
\begin{enumerate}
\item
$c_1=0$. Then we have
\begin{equation}\label{DSc2=0}
\begin{array}{c}
 \alpha_2{\bf q}_{t_2}=c{\bf q}_{xx}+cv_0{\bf q},\,\,\,
 \beta v_{0y}=-2({\bf q}{\cal M}_0{\bf q}^*)_x.
\end{array}
\end{equation}

\item $c=0$. Then (\ref{DS}) takes the form
\begin{equation}\label{DSc1=0}
\begin{array}{c}
 \alpha_2{\bf q}_{t_2}=c_1\beta^2{\bf q}_{yy}+c_1{\bf
 q}{\cal M}_0S,\,\,\,\
S_{x}=-2\beta({\bf q}^*{\bf q})_y.
\end{array}
\end{equation}
\end{enumerate}
Systems (\ref{DSc2=0}) and (\ref{DSc1=0}) are two different matrix
generalizations of the Davey-Stewartson equation (DS-III-a and
DS-III-b). In the scalar case equation (\ref{DSc2=0}) was
investigated in \cite{Zakharov,fokas}. The vector version of DS-III
(\ref{DSc2=0}) and its Lax representation given by operators
(\ref{L0}) and (\ref{1eq}) in case $c_1=0$, $N=1$ were introduced in
\cite{OC,MSS,6SSS}. 

Let us consider (\ref{DS}) in the case where $\beta=1$, $u:={\bf q}$
and ${\cal M}_0:=\mu$ are scalars. Then (\ref{DS}) becomes
\begin{equation}\label{DSuscal}
\begin{array}{c}
 \alpha_2u_{t_2}=cu_{xx}+c_1u_{yy}+vu+\mu c_1 S u,\,\,
 v_{0y}=-2\mu(|u|^2)_x,\,\,S_{x}=-2(|u|^2)_y.
\end{array}
\end{equation}
Setting $c=c_1=1$ and $\mu=1$, as a consequence of (\ref{DSuscal})
we obtain
\begin{equation}\label{DS2}
\begin{array}{c}
 \alpha_2u_{t_2}=u_{xx}+u_{yy}+S_{1}u,\,\,\,
 S_{1,xy}=-2(|u|^2)_{xx}-2(|u|^2)_{yy},
\end{array}
\end{equation}
where $S_1=v_0+S$. This is the well-known Davey-Stewartson system
(DS-I) and (\ref{DS}) is therefore a matrix (noncommutative)
generalization. The interest in noncommutative versions of DS
systems (in particular, solution generating technique) has also
arisen recently in \cite{DMH,GM}. In Section \ref{dressed} we will
consider dressing method for matrix generalization of DS system
(\ref{DS}) that leads to its exact solutions.
  In \cite{n} DS-II system was also
obtained from system (\ref{DSnr}) and its Lax pair (\ref{L0}),
(\ref{1eq}) after the change $x\rightarrow z$, $y\rightarrow
\bar{z}$.

\item $n=2$, $l=2$.
\begin{equation}\label{1eqka}
\begin{array}{l}
\tilde{{M }}_{2,2}=M_{2,2}-c_2(L_0)^3=\alpha_2 {{\partial
}_{{{t}_{2}}}}-c_2\beta^3\partial_y^3-{{D}^{2}}-v_0+3\beta^2c_2{\bf q}_y{\cal M}_0D^{-1}{\bf r}^{\top}_y  \\
      -3\beta c_2{\bf q}{\cal M}_0\partial_yD^{-1}{\bf r}^{\top}{\bf q}{\cal M}_0D^{-1}{\bf r}^{\top}
  + 3\beta c_2{\bf q}{\cal M}_0D^{-1}\{{\bf r}^{\top}{\bf q}\}_y{\cal M}_0D^{-1}{\bf r}^{\top} \\
  +3c_2\beta^2\partial_y{\bf q}{\cal M}_0D^{-1}{\bf
  r}^{\top}\partial_y.
  \end{array}
\end{equation}
The commutator equation $[L_0,\tilde{M}_{2,2}]=0$ is equivalent to
the system:
\begin{equation}\label{aDSL1M3}
\begin{array}{l}
\alpha_2{\bf q}_{t_2}-{\bf q}_{xx}-c_2\beta^3{\bf q}_{yyy}- v_0{\bf
q}+ 3c_2\beta{\bf q}_y{\cal M}_0S_1+ 3\beta c_2{\bf q}{\cal
M}_0S_{1y}
-3c_2{\bf q}{\cal M}_0S_2 \\
 -3c_2{\bf q}D^{-1}\left\{{\cal M}_0{\bf r}^{\top}{\bf q}{\cal
M}_0S_1-{\cal M}_0S_1{\cal M}_0{\bf r}^{\top}{\bf q}\right\}=0, \\
 -\alpha_2{\bf r}^{\top}_{t_2}-{\bf r}^{\top}_{xx}-\beta^3c_2{\bf
r}^{\top}_{yyy}-{\bf r}^{\top}v_0+ 3\beta c_2S_1{\cal M}_0{\bf
r}^{\top}_y+3c_2S_2{\cal M}_0{\bf r}^{\top}  \\
 - 3c_2D^{-1}\left\{S_1{\cal M}_0{\bf r}^{\top}{\bf q}-{\bf
r}^{\top}{\bf q}{\cal
M}_0S_1\right\}{\cal M}_0{\bf r}^{\top}=0, \\
\beta v_{0y}=-2({\bf q}{\cal M}_0{\bf r}^{\top})_x,\,\,
S_{1x}=\beta({\bf{r}}^{\top}{\bf q})_y,\,\,\, \,S_{2x}=\beta^2({\bf
r}^{\top}_y{\bf q})_y.
\end{array}
\end{equation}




\item $n=3,l=1$.

In this case the operator $\tilde{M}_{3,1}$ has the form:
\begin{equation}\label{L1M3e}
\begin{array}{l}
\tilde{M}_{3,1}=M_{3,1}-c_1(L_0)^2=\alpha_{3}\partial_{t_{3}}
-{D}^3- v_{1} {D} - v_{0}-c_1\beta^2\partial^2_y+\\+2c_1\beta{\bf
q}{\cal M}_0D^{-1}{\bf r}^{\top}_y+2c_1\beta{\bf q}{\cal
M}_0D^{-1}{\bf r}^{\top}\partial_y.
\end{array}
\end{equation}
The equation $[L_0,\tilde{M}_{3,1}]=0$ is equivalent to the system:
\begin{equation}\label{mKDV2}
\begin{array}{l}
\alpha_{3}{\bf q}_{t_3}  =  {\bf q}_{xxx}+c_1\beta^2{\bf q}_{yy}+
v_{1}{\bf q}_{x} + v_{0}{\bf q}+c_1{\bf
 q}{\cal M}_0S_1,\\
-\alpha_{3}{\bf r}^{\top}_{t_3}  =-  {\bf
r}^{\top}_{xxx}+c_1\beta^2{\bf r}^{\top}_{yy} -({\bf
r}^{\top}v_{1})_x + {\bf r}^{\top}v_{0}+c_1S_1{\cal M}_0{\bf r}^{\top},\\
\beta{v_{0,y}}
 = - 3({\bf q}_{x}{\cal M}_0{\bf r}^{\top})_{x}+[{\bf q}{\cal M}_0{\bf
r}^{\top},v_1], \,\,\\ \beta  {v_{1,y}} =-3 ({\bf q}{\cal M}_0{\bf
r}^{\top})_{x},\,\,S_{1x}=-2\beta({\bf r}^{\top}{\bf q})_y.
\end{array}
\end{equation}
\item $n=3,l=2$

\begin{equation}\label{L1M3}
\begin{array}{l}
 \tilde{M}_{3,2}=\alpha_3\partial_{t_3}-cD^3-c_2\beta^3\partial_y^3-cv_1D-
cv_0+3\beta^2c_2{\bf q}_y{\cal M}_0D^{-1}{\bf r}^{\top}_y-  \\
      -3\beta c_2{\bf q}{\cal M}_0\partial_yD^{-1}{\bf r}^{\top}{\bf q}{\cal M}_0D^{-1}{\bf r}^{\top}
  + 3\beta c_2{\bf q}{\cal M}_0D^{-1}\{{\bf r}^{\top}{\bf q}\}_y{\cal M}_0D^{-1}{\bf r}^{\top} \\
  +3c_2\beta^2\partial_y{\bf q}{\cal M}_0D^{-1}{\bf
  r}^{\top}\partial_y.
\end{array}
\end{equation} The
Lax equation $[L_0,\tilde{M}_{3,2}]=0$ results in the system:
\begin{equation}\label{DSL1M3}
\begin{array}{l}
\alpha_3{\bf q}_{t_3}-c{\bf q}_{xxx}-c_2\beta^3{\bf q}_{yyy}-
cv_1{\bf q}_x+ 3c_2\beta{\bf q}_y{\cal M}_0S_1+ 3c_2\beta {\bf
q}{\cal M}_0S_{1y} -cv_0{\bf q}-\\-3c_2{\bf q}{\cal M}_0S_2-3c_2{\bf
q}D^{-1}\left\{{\cal M}_0{\bf r}^{\top}{\bf q}{\cal M}_0S_1
  -{\cal M}_0S_1{\cal M}_0{\bf r}^{\top}{\bf q}\right\}=0,\\
 \alpha_3{\bf r}^{\top}_{t_3}-c{\bf r}^{\top}_{xxx}-\beta^3c_2{\bf
r}^{\top}_{yyy}-c{\bf r}^{\top}_xv_1-c{\bf r}^{\top}v_{1x}+ 3\beta
c_2S_1{\cal M}_0{\bf r}^{\top}_y+ c{\bf r}^{\top}v_0
+\\+3c_2S_2{\cal M}_0{\bf r}^{\top} - 3c_2D^{-1}\left\{S_1{\cal
M}_0{\bf r}^{\top}{\bf q}-{\bf r}^{\top}{\bf q}{\cal
M}_0S_1\right\}{\cal M}_0{\bf r}^{\top}=0,\\
\beta v_{1,y}=-3({\bf q}{\cal M}_0{\bf r}^{\top})_x,\,
S_{1x}=\beta({\bf{r}}^{\top}{\bf q})_y,\, \\\beta v_{0y}=-3({\bf
q}_x{\cal M}_0{\bf r}^{\top})_x+[{\bf q}{\cal M}_0{\bf
r}^{\top},v_1],\,S_{2x}=\beta^2({\bf r}^{\top}_y{\bf q})_y.
\end{array}
\end{equation}
System (\ref{DSL1M3}) and its Lax pair $L_0$ (\ref{L0}) and
$\tilde{M}_{3,2}$ (\ref{L1M3}) was investigated in \cite{n}. We list
some reductions of this system:
\begin{enumerate}
\item $\alpha_3,\beta,c,c_2\in{\mathbb{R}}$, ${\bf r}^{\top}={\bf q}^*$, ${\cal M}_0={\cal M}^*_0$.
The operators $L_1$ and $M_3$ are then skew-Hermitian and
(\ref{DSL1M3}) takes the form
\begin{equation}\label{DSL1M3r}
\begin{array}{l}
 \alpha_3{\bf q}_{t_3}-c{\bf q}_{xxx}-c_2\beta^3{\bf q}_{yyy}-
cv_1{\bf q}_x+ 3c_2\beta{\bf q}_y{\cal M}_0S_1+ 3c_2\beta{\bf
q}{\cal M}_0S_{1y}- \\-cv_0{\bf q}-3c_2{\bf q}{\cal M}_0S_2
 - 3c_2{\bf q}D^{-1}\left\{{\cal M}_0{\bf q}^{*}{\bf q}{\cal
M}_0S_1-{\cal
M}_0S_1{\cal M}_0{\bf q}^{*}{\bf q}\right\}=0,  \\
 \beta v_{1y}=-3({\bf q}{\cal M}_0{\bf q}^*)_x,\, S_{1x}=\beta({\bf
q}^*{\bf{q}})_y,\,\\ \beta v_{0y}=-3({\bf q}_x{\cal M}_0{\bf
q}^*)_x+ [{\bf q}{\cal M}_0{\bf q}^*,v_1],\,S_{2x}=\beta^2({\bf
q}^*_y{\bf q})_y.
\end{array}
\end{equation}
In the scalar case ($N=m=1$), setting ${\mathbb{R}}\ni\mu:={\cal
M}_0$, $q(x,y,t_3):={\bf q}(x,y,t_3)$, (\ref{DSL1M3r}) reads
\begin{eqnarray}\nonumber
&& \alpha_3{q}_{t_3}-c{q}_{xxx}-c_2{q}_{yyy}+
3c\mu{q}_x\int|q|^2_xdy+ 3c_2\mu{q}_y\int|q|^2_ydx+\nonumber \\
&&+3c_2\mu{q}\int({\bar{q}}q_y)_ydx+3c\mu{q}\int({q}_xq)_xdy=0.
\end{eqnarray}
In the real case ${\bf q}=\bar{{\bf q}}$, (\ref{DSL1M3r}) becomes
\begin{equation}\label{DSL1M3rd}
\begin{array}{l}
\alpha_3{\bf q}_{t_3}-c{\bf q}_{xxx}-c_2\beta^3{\bf q}_{yyy}-
cv_1{\bf q}_x+ 3c_2\beta{\bf q}_y{\cal M}_0S_1+ 3c_2{\bf q}{\cal
M}_0S_{1y}-
\\-cv_3{\bf q}-3c_2{\bf q}{\cal M}_0S_2  - 3c_2{\bf q}D^{-1}\left\{{\cal M}_0{\bf q}^{{\top}}{\bf q}{\cal
M}_0S_1-{\cal
M}_0S_1{\cal M}_0{\bf q}^{{\top}}{\bf q}\right\}=0, \\
\beta v_{1y}=-3({\bf q}{\cal M}_0{\bf q}^{\top})_x,\,
S_{1x}=\beta({\bf q}^{\top}{\bf{q}})_y,\,\\ \beta v_{0y}=-3({\bf
q}_x{\cal M}_0{\bf q}^{\top})_x+[{\bf q}{\cal M}_0{\bf
q}^{\top},v_1],\,S_{2x}=\beta^2({\bf q}^{\top}_y{\bf q})_y.
\end{array}
\end{equation}
In the scalar case ($N=m=1$), writing ${\cal M}_0=\mu$ and
$q={q(x,y,t_3)={\bf q}(x,y,t_3)}$,  after setting $y=x$ and
$c+c_2=-1$, $\beta=1$ (\ref{DSL1M3rd}) takes the form
\begin{equation}\label{DSL1M3rdddd}
\alpha_3{ q}_{t_3}+{q}_{xxx}- 6\mu q^2q_x=0,
\end{equation}
which is the mKdV equation. The systems (\ref{DSL1M3r}) and
(\ref{DSL1M3rd}) are therefore, respectively, complex and real,
spatially two-dimensional matrix generalizations of it.

\item $\beta=1$, ${\cal M}_0{\bf r}^{\top}=\nu$ with a constant matrix $\nu$.
In terms of $u := {\bf q}\nu$, (\ref{DSL1M3}) takes the form
\begin{eqnarray}\label{DSL1M3r2}
&& \alpha_3u_{t_3}-cu_{xxx}-c_2u_{yyy}+3cD\left\{\left(\int u_x
dy\right)u\right\}+3c_2\partial_y\left\{u\left(\int u_y
dx\right)\right\} \nonumber \\
&& - c\left(\int[u,v_1]dy\right)u-3c_2u\left(\int[u,S_1]dx\right)=0, \nonumber \\
&&  \nu\left(c\int[u,v_1]dy-3c_2\int[S_1,u]dx\right)=0, \quad
  v_{1y}=-3u_x,\,\,\, S_{1x}=u_y.
\end{eqnarray}
In the scalar case ($N=1,m=1$), this reduces to
\begin{equation}\label{DSL1M3r2scal}
\alpha_3u_{t_3}-cu_{xxx}-c_2u_{yyy}+3cD\left\{\left(\int u_x
dy\right)u\right\}+3c_2\partial_y\left\{u\left(\int u_y
dx\right)\right\}=0,
\end{equation}
which is the Nizhnik equation \cite{Nizhnik80}. The system
(\ref{DSL1M3r2}) thus generalizes the latter to the matrix case.
\end{enumerate}
\end{enumerate}

II. $k=2$

Now we will consider two Lax pairs connected with the operator
$L_2$:
\begin{equation}
L_2=\beta_2\partial_{\tau_2}-D^2-2u-{\bf q}{\cal M}_0D^{-1}{\bf
r}^{\top}.
\end{equation}

\begin{enumerate}
\item $n=2$, $l=1$
\begin{equation}\label{YOP}
\begin{array}{l}
\tilde{M}_{2,1}=M_{2,1}-c_1(L_2)^2=\alpha_2\partial_{t_2}-D^2-2u-c_1\left(\beta_2^2\partial_{\tau_2}^2-2\beta_2\partial_{\tau_2}D^2+\right.\\\left.+D^4-2\beta_2u_{\tau_2}-4\beta_2u\partial_{\tau_2}+4uD^2+4u_xD+2u_{xx}\right.\left.+2{\bf
q}_x{\cal M}_0{\bf r}^{\top}\right.+\\+\left.2{\bf q}{\cal M}_0{\bf
r}^{\top}D+4u^2-2\beta_2{\bf q}{\cal M}_0D^{-1}\partial_{\tau_2}{\bf
r}^{\top}\right).
\end{array}
\end{equation}
Lax equation $[L_2,\tilde{M}_{2,1}]=0$ is equivalent to the system:
\begin{equation}\label{YO}
\left.
\begin{array}{l}
\alpha_2u_{t_2}-\beta_2u_{\tau_2}=({\bf q}{\cal M}_0{\bf
r}^{\top})_x-c_1(L_2\{{\bf q}\}{\cal M}_0{\bf r}^{\top}+{\bf q}{\cal M}_0(L_2^{\tau}\{{\bf r}\})^{\top})_x,\\
M_{2,1}\{{\bf q}\}=0, M_{2,1}^{\tau}\{{\bf r}\}=0.
\end{array}\right.
\end{equation}
In vector case ($N=1$) and under additional reductions $c_1=0$,
$\alpha_2,\beta_2\in i{\mathbb{R}}$, ${\cal M}_0=-{\cal M}_0^*$,
$u=\bar{u}$ and ${\bf q}=\bar{\bf r}$ system (\ref{YO}) reduces to
(2+1)-generalization of Yajima-Oikawa equation
(\ref{Yajima-Oikawa2+1}).
\item $n=3$, $l=1$.

\begin{equation}\label{YOx}
\begin{array}{l}
\tilde{M}_{3,1}\!=\!M_{3,1}\!-\!c_1(L_2)^2=\alpha_3\partial_{t_3}-D^3-3uD-\frac{3}{2}\left(u_x+\beta_2D^{-1}\{u_{\tau_2}\}+{\bf
q}{\cal M}_0{\bf
r}^{\top}\right)-\\-c_1\left(\beta_2^2\partial_{\tau_2}^2-2\beta_2\partial_{\tau_2}D^2+D^4-2\beta_2u_{\tau_2}-4\beta_2u\partial_{\tau_2}+4uD^2+4u_xD+2u_{xx}\right.+\\\left.+2{\bf
q}_x{\cal M}_0{\bf r}^{\top}+2{\bf q}{\cal M}_0{\bf
r}^{\top}D+4u^2-2\beta_2{\bf q}{\cal M}_0D^{-1}\partial_{\tau_2}{\bf
r}^{\top}\right).
\end{array}
\end{equation}
The equation $[L_2,\tilde{M}_{3,1}]=0$ is equivalent to the
following system:
\begin{equation}\label{Yao23}
\left.
\begin{array}{l}
\alpha_3 u_{t_3}-\frac14 u_{xxx}-3uu_x-\frac34\beta_2({\bf q}{\cal
M}_0{\bf r}^{\top})_{\tau_2}+\\+\frac{3}{4}({\bf q}{\cal M}_0{\bf
r}^{\top}_x-{\bf q}_x{\cal M}_0{\bf
r}^{\top})_x+\frac34[u,u_x+\beta_2D^{-1}\{u_{\tau_2}\}+{\bf q}{\cal
M}_0{\bf r}^{\top}]-\\-[{\bf q}{\cal M}_0{\bf
r}^{\top},u]-c_1L_2\{{\bf q}\}{\cal M}_0{\bf r}^{\top}-c_1{\bf
q}{\cal M}_0(L_2^{\tau}\{{\bf
r}\})^{\top}=\frac34\beta_2^2D^{-1}\{u_{\tau_2\tau_2}\},\\\tilde{M}_{3,1}\{{\bf
q}\}=0,\,\,\tilde{M}_{3,1}^{\tau}\{{\bf r}\}=0.
\end{array}\right.
\end{equation}
In the vector case ($N=1$) under additional reductions
$\alpha_3\in{\mathbb{R}}$, $\beta_2\in i{\mathbb{R}}$, ${\cal
M}_0=-{\cal M}_0^*$, ${\bf q}=\bar{\bf r}$ and $u=\bar{u}$ system
(\ref{Yao23}) reduces to (2+1)-generalization of higher
Yajima-Oikawa system (\ref{3100}).
\end{enumerate}
\begin{remark}
After putting $\alpha_2=0$ and $\alpha_3=0$ in formulae
(\ref{1eq})-(\ref{Yao23}) we obtain examples of a new
(1+1)-dimensional Matrix k-cKP hierarchy \cite{NEW} (see case 2
after the proof of Theorem \ref{T1}).
\end{remark}

\section{Dressing methods for Extensions of (2+1)-dimensional k-constrained KP
hierarchy}\label{dressed} In this section our aim is to consider
hierarchy of equations given by the Lax pair (\ref{ex2+1}).
We suppose that the operators $L_k$ and $M_{n,l}$ in (\ref{ex2+1})
satisfy the commutator equation $[L_k,M_{n,l}]=0$. At first we
recall some results from \cite{K2009}. Let $N\times K$-matrix
functions $\varphi$ and $\psi$ be solutions of linear problems:
\begin{equation}\label{pr}
\begin{array}{c}
L_k\{\varphi\}=\varphi\Lambda,\,\,L_k^{\tau}\{\psi\}=\psi\tilde{\Lambda},\,\,\Lambda,\tilde{\Lambda}\in Mat_{K\times K}({\mathbb{C}}).\\
\end{array}
\end{equation}
Introduce  binary Darboux transformation (BDT) in the following way:
\begin{equation}\label{W}
W=I-\varphi\left(C+D^{-1}\{\psi^{\top}\varphi\}\right)^{-1}D^{-1}\psi^{\top},
\end{equation}
where $C$ is a $K\times K$-constant nondegenerate matrix. The
inverse operator $W^{-1}$ has the form:
\begin{equation}\label{W-}
W^{-1}=I+\varphi
D^{-1}\left(C+D^{-1}\{\psi^{\top}\varphi\}\right)^{-1}\psi^{\top}.
\end{equation}
 The following theorem is proven in \cite{K2009}.
\begin{theorem}{\cite{K2009}}\label{2009}
The operator $\hat{L}_k:=WL_kW^{-1}$  obtained from $L_k$ in
(\ref{ex2+1}) via BDT (\ref{W}) has the form
\begin{equation}\label{Lop}
\hat{L}_k:=WL_kW^{-1}=\beta_k\partial_{\tau_k}-\hat{B}_k-\hat{\bf
q}{\cal M}_0D^{-1}{\hat{{\bf r}}}^{\top}+\Phi{\cal
M}_1D^{-1}\Psi^{\top},\, \hat{B}_k=\sum_{j=0}^{k}\hat{u}_jD^j,
\end{equation}
where
\begin{equation}\label{DSM}
\begin{array}{l}
{\cal M}_1=C\Lambda-\tilde{\Lambda}^{\top}C,\,
\Phi=\varphi\Delta^{-1},\,\,
\Psi=\psi\Delta^{-1,\top},\,\Delta=C+D^{-1}\{\psi^{\top}\varphi\},\\
{\hat{\bf q}}=W\{{\bf q}\},\,\,{\hat{\bf r}}=W^{-1,\tau}\{{\bf r}\}.
\end{array}
\end{equation}
 $\hat{u}_j$ are $N\times N$-matrix coefficients depending on functions $\varphi$,
$\psi$ and $u_j$. In particular,
\begin{equation}
\hat{u}_n=u_n,\,\,\hat{u}_{n-1}=u_{n-1}+\left[u_n,\varphi\left(C+D^{-1}\{\psi^{\top}\varphi\}\right)^{-1}\psi^{\top}\right].
\end{equation}
\end{theorem}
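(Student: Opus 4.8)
The plan is to establish the form of $\hat L_k=WL_kW^{-1}$ by a direct conjugation. I would first set $\Delta:=C+D^{-1}\{\psi^{\top}\varphi\}$, so that $D\{\Delta\}=\psi^{\top}\varphi$ and $D\{\Delta^{-1}\}=-\Delta^{-1}(\psi^{\top}\varphi)\Delta^{-1}$, and rewrite the dressing operators compactly as $W=I-\Phi D^{-1}\psi^{\top}$ and $W^{-1}=I+\varphi D^{-1}\Psi^{\top}$, with $\Phi=\varphi\Delta^{-1}$ and $\Psi^{\top}=\Delta^{-1}\psi^{\top}$ (so also $\varphi=\Phi\Delta$, $\psi^{\top}=\Delta\Psi^{\top}$). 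Expanding the triple product,
\[
\hat L_k=L_k+L_k\varphi D^{-1}\Psi^{\top}-\Phi D^{-1}\psi^{\top}L_k-\Phi D^{-1}\psi^{\top}L_k\varphi D^{-1}\Psi^{\top},
\]
and the whole argument amounts to bringing the three new terms back to the normal form ``a differential operator of order $k$ plus finitely many integral tails $fD^{-1}g$''. The tools are the composition identities (\ref{Sydorenko:eq25})--(\ref{Sydorenko:eq27}) (together with the analogous rules for the $\partial_{\tau_k}$-part, since $\partial_{\tau_k}$ commutes with $D$) and the two linear problems (\ref{pr}), which are exactly what makes the tails close up: writing $L_k=\beta_k\partial_{\tau_k}-B_k-{\bf q}{\cal M}_0D^{-1}{\bf r}^{\top}$, the function-coefficients of the tails produced by its three summands add up to $\beta_k\varphi_{\tau_k}-B_k\{\varphi\}-{\bf q}{\cal M}_0D^{-1}\{{\bf r}^{\top}\varphi\}=L_k\{\varphi\}=\varphi\Lambda$, and dually to $L_k^{\tau}\{\psi\}=\psi\tilde\Lambda$.

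Next I would split $\hat L_k-\beta_k\partial_{\tau_k}$ into its $D$-nonnegative and $D$-negative parts. For the first one finds $\hat B_k=B_k+([B_k,\Phi D^{-1}\psi^{\top}]W^{-1})_{\geq0}$, and the Leibniz rule reads off its top coefficient $\hat u_k=u_k$ (unchanged, since $W=I+O(D^{-1})$) and its next coefficient $\hat u_{k-1}=u_{k-1}+[u_k,\varphi(C+D^{-1}\{\psi^{\top}\varphi\})^{-1}\psi^{\top}]$, the commutator correction of the statement --- nonzero in general only because the coefficients are noncommuting matrices. For the $D$-negative part one regroups all the integral tails: after replacing $L_k\{\varphi\}$ and $L_k^{\tau}\{\psi\}$ by $\varphi\Lambda$ and $\psi\tilde\Lambda$, and then using $\varphi=\Phi\Delta$, $\psi^{\top}=\Delta\Psi^{\top}$ together with the relations $D^{-1}\{\psi^{\top}\varphi\}\,\Delta^{-1}=I-C\Delta^{-1}$ and $\Delta^{-1}D^{-1}\{\psi^{\top}\varphi\}=I-\Delta^{-1}C$ (both immediate from $\Delta=C+D^{-1}\{\psi^{\top}\varphi\}$), everything collapses into $-\hat{\bf q}{\cal M}_0D^{-1}\hat{\bf r}^{\top}+\Phi{\cal M}_1D^{-1}\Psi^{\top}$ with $\hat{\bf q}=W\{{\bf q}\}$, $\hat{\bf r}=W^{-1,\tau}\{{\bf r}\}$ and ${\cal M}_1=C\Lambda-\tilde\Lambda^{\top}C$. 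The extra tail $\Phi{\cal M}_1D^{-1}\Psi^{\top}$ is present precisely because $\Lambda,\tilde\Lambda$ are nonzero, and drops out when $C\Lambda=\tilde\Lambda^{\top}C$. These two computations give (\ref{Lop})--(\ref{DSM}).

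The main obstacle is organizational rather than conceptual: one must carry the noncommuting constants $C$, $\Lambda$, $\tilde\Lambda$ correctly through every application of (\ref{Sydorenko:eq27}) and respect the order of the factors, since $\Delta^{-1}$ does not commute with $D^{-1}$ --- the two relations for $\Delta$ quoted above are precisely what organizes the collapse, and everything else is routine manipulation. To be safe I would cross-check the outcome in two ways: ${\cal M}_1$ vanishes whenever $C\Lambda=\tilde\Lambda^{\top}C$, in which case $\hat L_k$ retains the original $k$-constrained shape; and $\hat L_k\{\Phi C\}=(\Phi C)\Lambda$, which follows at once from $WL_k=\hat L_kW$ applied to $\varphi$ together with $W\{\varphi\}=\varphi\Delta^{-1}C=\Phi C$.
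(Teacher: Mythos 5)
The paper does not actually prove this statement: it is imported from Ref.~\cite{K2009} (``The following theorem is proven in \cite{K2009}''), and only the explicit coefficients $\hat u_j$ are deferred there, so there is no in-paper proof to compare yours against. Your direct-conjugation verification is nonetheless the natural argument and is sound. The rewriting $W=I-\Phi D^{-1}\psi^{\top}$, $W^{-1}=I+\varphi D^{-1}\Psi^{\top}$ is correct; since the commutators of $\Phi D^{-1}\psi^{\top}$ with $\beta_k\partial_{\tau_k}$ and with ${\bf q}{\cal M}_0D^{-1}{\bf r}^{\top}$ have negative order, your formula $\hat B_k=B_k+\bigl([B_k,\Phi D^{-1}\psi^{\top}]W^{-1}\bigr)_{\geq0}$ and the resulting $\hat u_k=u_k$, $\hat u_{k-1}=u_{k-1}+[u_k,\varphi\Delta^{-1}\psi^{\top}]$ are right (the indices $n$, $n-1$ in the statement are evidently a misprint for $k$, $k-1$). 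Your collapse of the negative part via (\ref{Sydorenko:eq25})--(\ref{Sydorenko:eq27}), the linear problems (\ref{pr}), and $D^{-1}\{\psi^{\top}\varphi\}=\Delta-C$ is exactly the manipulation the paper itself carries out in the special case $L_0=\beta\partial_y$ (formulae (\ref{a}), (\ref{hL0})), and your cross-checks agree with (\ref{ro}). The only thin spot is the $\partial_{\tau_k}$-bookkeeping: the terms where $\partial_{\tau_k}$ reaches $\psi$ (equivalently $\Delta$) must be reduced through the transposed linear problem $L_k^{\tau}\{\psi\}=\psi\tilde\Lambda$ and the Lagrange-type identities behind (\ref{Sydorenko:eq25})--(\ref{Sydorenko:eq27}); your phrase about ``analogous rules for the $\partial_{\tau_k}$-part'' covers this, and no step would fail, but that is where the real work of the ``routine manipulation'' sits.
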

Exact forms of all coefficients  $\hat{u}_j$ are given in
\cite{K2009}.

The following corollary follows from Theorem \ref{2009}:
\begin{corollary}
The functions $\Phi=\varphi\Delta^{-1}=W\{\varphi\}C^{-1}$ and $\Psi
=\psi\Delta^{-1,\top}=W^{-1,\tau}\{\psi\}C^{\top,-1}$ satisfy the
equations
\begin{equation}\label{ro}
\hat{L}_k\{\Phi\}=\Phi C\Lambda C^{-1},\,\,
\hat{L}_k^{\tau}\{\Psi\}=\Psi C^{\top}\tilde{\Lambda}C^{\top,-1}.
\end{equation}
\end{corollary}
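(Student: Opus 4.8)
The plan is to deduce the statement directly from Theorem~\ref{2009}, namely from the relation $\hat L_k=WL_kW^{-1}$ (equivalently, the intertwining $\hat L_kW=WL_k$), together with the two alternative representations of $\Phi$ and $\Psi$ recorded in the corollary and the linear problems~(\ref{pr}). The only structural ingredient needed is that $W$, $W^{-1}$, $L_k$ and their formal transposes are (pseudo)differential operators with matrix coefficients acting from the left, hence they commute with multiplication by a constant matrix on the right: $W^{-1}\{fC^{-1}\}=W^{-1}\{f\}C^{-1}$, $L_k\{fC^{-1}\}=L_k\{f\}C^{-1}$, and similarly for the transposed operators and for $C^{\top}$.

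First I would verify the alternative expressions for $\Phi$ and $\Psi$. Acting with $W$ from~(\ref{W}) on $\varphi$ and using $D^{-1}\{\psi^{\top}\varphi\}=\Delta-C$ gives $W\{\varphi\}=\varphi-\varphi\Delta^{-1}(\Delta-C)=\varphi\Delta^{-1}C=\Phi C$, so $\Phi=W\{\varphi\}C^{-1}$. For $\Psi$ I would form $W^{-1,\tau}$ from~(\ref{W-}) using $(D^{-1})^{\tau}=-D^{-1}$ and the fact that $D^{-1}$ commutes entrywise with the matrix transpose; since $(\psi^{\top}\varphi)^{\top}=\varphi^{\top}\psi$ this yields $D^{-1}\{\varphi^{\top}\psi\}=(D^{-1}\{\psi^{\top}\varphi\})^{\top}=\Delta^{\top}-C^{\top}$, whence $W^{-1,\tau}\{\psi\}=\psi-\psi(\Delta^{\top})^{-1}(\Delta^{\top}-C^{\top})=\psi\Delta^{-1,\top}C^{\top}=\Psi C^{\top}$.

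The eigenvalue relations then follow by a short computation. Using $\hat L_k=WL_kW^{-1}$, $W^{-1}W=I$, commutation with right multiplication by a constant, and $L_k\{\varphi\}=\varphi\Lambda$ from~(\ref{pr}),
\[
\hat L_k\{\Phi\}=W\{L_k\{W^{-1}\{\Phi\}\}\}=W\{L_k\{\varphi C^{-1}\}\}=W\{\varphi\Lambda C^{-1}\}=W\{\varphi\}\Lambda C^{-1}=\Phi C\Lambda C^{-1}.
\]
For the transposed identity I would apply $\tau$ to $\hat L_k=WL_kW^{-1}$, which reverses the order of composition and gives $\hat L_k^{\tau}=W^{-1,\tau}L_k^{\tau}W^{\tau}$; then, using $W^{\tau}W^{-1,\tau}=I$ and $L_k^{\tau}\{\psi\}=\psi\tilde\Lambda$, one gets $W^{\tau}\{\Psi\}=\psi C^{\top,-1}$, next $L_k^{\tau}\{\psi C^{\top,-1}\}=\psi\tilde\Lambda C^{\top,-1}$, and finally $\hat L_k^{\tau}\{\Psi\}=W^{-1,\tau}\{\psi\tilde\Lambda C^{\top,-1}\}=\Psi C^{\top}\tilde\Lambda C^{\top,-1}$.

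I do not expect a genuine difficulty here; the argument is essentially bookkeeping. The only delicate point is handling the formal transposes of the pseudodifferential operators correctly — one must use that $\tau$ reverses the order of a composition (so $\hat L_k^{\tau}=W^{-1,\tau}L_k^{\tau}W^{\tau}$ rather than $W^{\tau}L_k^{\tau}W^{-1,\tau}$) and keep careful track of where the constant matrices $C$, $C^{\top}$ and their inverses sit relative to operator action. Once these conventions are fixed, each displayed equality above reduces to a one-line manipulation.
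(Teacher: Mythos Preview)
Your proposal is correct and is precisely the argument the paper has in mind: the corollary is stated without an explicit proof, simply as following from Theorem~\ref{2009}, and your intertwining computation $\hat L_k W=WL_k$ (together with the verification of $\Phi=W\{\varphi\}C^{-1}$, $\Psi=W^{-1,\tau}\{\psi\}C^{\top,-1}$) is exactly how one fills that in. Your handling of the formal transpose and of right multiplication by constant matrices is accurate.
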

 For further purposes we will need the following lemmas.
\begin{lemma}
Let ${\cal M}_{l+1}$ be a matrix of the form
\begin{equation}
{\cal M}_{l+1}=C\Lambda^{l+1}-(\tilde{\Lambda}^{\top})^{l+1}C,\,\,
l\in{\mathbb{N}}.
\end{equation}
The following formula holds:
\begin{equation}\label{lemma11}
{\cal M}_{l+1}=\sum_{s=0}^lC\Lambda^sC^{-1}{\cal
M}_1C^{-1}(\tilde{\Lambda}^{\top})^{l-s}C.
\end{equation}
\end{lemma}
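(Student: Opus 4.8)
The plan is to prove the identity \eqref{lemma11} by a straightforward telescoping argument. First I would observe that the claimed right-hand side, $\sum_{s=0}^{l}C\Lambda^{s}C^{-1}\,{\cal M}_1\,C^{-1}(\tilde{\Lambda}^{\top})^{l-s}C$, is designed precisely so that the conjugating factors $C^{-1}$ and $C$ cancel between consecutive terms. Substituting the definition ${\cal M}_1=C\Lambda-\tilde{\Lambda}^{\top}C$ into the summand gives
\begin{equation}\nonumber
C\Lambda^{s}C^{-1}\bigl(C\Lambda-\tilde{\Lambda}^{\top}C\bigr)C^{-1}(\tilde{\Lambda}^{\top})^{l-s}C
= C\Lambda^{s+1}C^{-1}(\tilde{\Lambda}^{\top})^{l-s}C - C\Lambda^{s}C^{-1}(\tilde{\Lambda}^{\top})^{l-s+1}C .
\end{equation}
Writing $a_s:=C\Lambda^{s+1}C^{-1}(\tilde{\Lambda}^{\top})^{l-s}C$, the summand is $a_s-a_{s-1}$, so the sum over $s=0,\dots,l$ telescopes to $a_l-a_{-1}=C\Lambda^{l+1}C^{-1}C - C\Lambda^{0}C^{-1}(\tilde{\Lambda}^{\top})^{l+1}C = C\Lambda^{l+1}-(\tilde{\Lambda}^{\top})^{l+1}C$, which is exactly ${\cal M}_{l+1}$.

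Alternatively, and perhaps more cleanly for the write-up, I would proceed by induction on $l$. The base case $l=0$ is the definition ${\cal M}_1=C\Lambda-\tilde\Lambda^{\top}C$ (with the convention $\Lambda^0=(\tilde\Lambda^{\top})^0=I$). For the inductive step, assuming \eqref{lemma11} holds for $l-1$, one computes
\begin{equation}\nonumber
{\cal M}_{l+1}=C\Lambda^{l+1}-(\tilde\Lambda^{\top})^{l+1}C
= \bigl(C\Lambda^{l}-(\tilde\Lambda^{\top})^{l}C\bigr)C^{-1}(\tilde\Lambda^{\top})C + C\Lambda^{l}C^{-1}\bigl(C\Lambda-\tilde\Lambda^{\top}C\bigr),
\end{equation}
i.e. ${\cal M}_{l+1}={\cal M}_{l}\,C^{-1}\tilde\Lambda^{\top}C + C\Lambda^{l}C^{-1}{\cal M}_1$; then applying the inductive hypothesis to ${\cal M}_l$ and re-indexing the resulting sum produces the $(l+1)$-term expression on the right of \eqref{lemma11}. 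Either route is purely algebraic.

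There is essentially no serious obstacle here: the only thing to be careful about is the bookkeeping of the summation limits and the placement of the conjugating matrices $C,C^{-1}$, and making sure the "split" of $C\Lambda^{l+1}-(\tilde\Lambda^{\top})^{l+1}C$ in the inductive step is the correct one (adding and subtracting the cross term $C\Lambda^{l}C^{-1}\tilde\Lambda^{\top}C$, which is legitimate since $C$ is invertible). I would present the telescoping version as the main argument since it exhibits the cancellation transparently in one line, and note that invertibility of $C$ — guaranteed by the hypothesis that $C$ is nondegenerate in Theorem~\ref{2009} — is what makes the insertion of $C^{-1}C=I$ between factors meaningful.
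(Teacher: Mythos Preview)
Your proof is correct, and in fact both of your routes are considerably simpler than the paper's own argument in the Appendix. The paper proceeds by first establishing the two-step recurrence
\[
{\cal M}_{l+1}=C\Lambda C^{-1}{\cal M}_{l}+{\cal M}_{l}C^{-1}\tilde{\Lambda}^{\top}C-C\Lambda C^{-1}{\cal M}_{l-1}C^{-1}\tilde{\Lambda}^{\top}C,
\]
then proving by induction on an auxiliary parameter $k$ the intermediate identity
\[
{\cal M}_{l+1}=\sum_{s=0}^kC\Lambda^sC^{-1}{\cal M}_{l-k+1}C^{-1}(\tilde{\Lambda}^{\top})^{k-s}C-\sum_{s=1}^kC\Lambda^sC^{-1}{\cal M}_{l-k}C^{-1}(\tilde{\Lambda}^{\top})^{k-s+1}C,
\]
and finally specializing to $k=l-2$ and invoking a separate formula for ${\cal M}_2$. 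Your telescoping argument bypasses all of this: substituting ${\cal M}_1=C\Lambda-\tilde{\Lambda}^{\top}C$ directly into the right-hand side of \eqref{lemma11} collapses the sum in one step. Your inductive variant is also cleaner than the paper's, since the one-step recursion ${\cal M}_{l+1}={\cal M}_{l}C^{-1}\tilde{\Lambda}^{\top}C+C\Lambda^{l}C^{-1}{\cal M}_1$ you derive involves only ${\cal M}_l$, not ${\cal M}_{l-1}$. The paper's approach buys nothing extra here; your telescoping version is the preferable write-up.
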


\begin{lemma}
The following formula
\begin{equation}
\Phi{\cal M}_{l+1}D^{-1}\Psi^{\top}=\sum_{s=0}^{l}\Phi[s]{\cal
M}_{1}D^{-1}\Psi^{\top}[l-s],
\end{equation}
holds, where
\begin{equation}\label{fk}
\Phi[j]:=(\hat{L}_k)^j\{\Phi\},\,\Psi[j]:=(\hat{L}_k^{\tau})^j\{\Psi\}.
\end{equation}
\end{lemma}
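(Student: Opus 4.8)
The plan is to prove the identity
\[
\Phi{\cal M}_{l+1}D^{-1}\Psi^{\top}=\sum_{s=0}^{l}\Phi[s]{\cal M}_{1}D^{-1}\Psi^{\top}[l-s]
\]
by reducing it to the purely algebraic identity of the preceding lemma, namely
${\cal M}_{l+1}=\sum_{s=0}^{l}C\Lambda^{s}C^{-1}{\cal M}_{1}C^{-1}(\tilde{\Lambda}^{\top})^{l-s}C$, together with the eigenfunction relations from the corollary to Theorem~\ref{2009}. First I would record what $\Phi[s]$ and $\Psi^{\top}[l-s]$ are explicitly. By definition $\Phi[s]=(\hat{L}_k)^s\{\Phi\}$, and by the corollary $\hat{L}_k\{\Phi\}=\Phi C\Lambda C^{-1}$; iterating, $\hat{L}_k^{s}\{\Phi\}=\Phi (C\Lambda C^{-1})^{s}=\Phi C\Lambda^{s}C^{-1}$, since the conjugating factors telescope. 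Similarly $\Psi[l-s]=(\hat{L}_k^{\tau})^{l-s}\{\Psi\}=\Psi C^{\top}\tilde{\Lambda}^{\,l-s}C^{\top,-1}$, so that $\Psi^{\top}[l-s]=C^{\top,-1,\top}(\tilde{\Lambda}^{\top})^{l-s}C^{\top,\top}\Psi^{\top}=C^{-1}(\tilde{\Lambda}^{\top})^{l-s}C\,\Psi^{\top}$, where I use $(C^{\top})^{\top}=C$ and that transposition reverses products.

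Next I substitute these expressions into the right-hand side. Because $C\Lambda^{s}C^{-1}$, ${\cal M}_1$, and $C^{-1}(\tilde{\Lambda}^{\top})^{l-s}C$ are all constant matrices, they pass through the operator $D^{-1}$ freely, i.e. $AD^{-1}B=AD^{-1}B$ with constant $A,B$, and constant matrices may be moved past $D^{-1}$ on either side. Hence
\begin{align*}
\sum_{s=0}^{l}\Phi[s]{\cal M}_1D^{-1}\Psi^{\top}[l-s]
&=\sum_{s=0}^{l}\Phi\, C\Lambda^{s}C^{-1}{\cal M}_1 D^{-1}C^{-1}(\tilde{\Lambda}^{\top})^{l-s}C\,\Psi^{\top}\\
&=\Phi\left(\sum_{s=0}^{l}C\Lambda^{s}C^{-1}{\cal M}_1C^{-1}(\tilde{\Lambda}^{\top})^{l-s}C\right)D^{-1}\Psi^{\top}.
\end{align*}
Applying the previous lemma, the bracketed sum equals ${\cal M}_{l+1}$, which yields exactly $\Phi{\cal M}_{l+1}D^{-1}\Psi^{\top}$, completing the proof.

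The only genuinely delicate point is bookkeeping with the transposes: one must be careful that $\Psi[j]=(\hat{L}_k^{\tau})^j\{\Psi\}$ produces right-multiplication by $C^{\top}\tilde{\Lambda}^{j}C^{\top,-1}$ (from $\hat{L}_k^{\tau}\{\Psi\}=\Psi C^{\top}\tilde{\Lambda}C^{\top,-1}$), and that upon transposing to form $\Psi^{\top}[j]$ the order of factors reverses and each $C^{\top}$ becomes $C$ while $\tilde{\Lambda}^j$ becomes $(\tilde{\Lambda}^{\top})^{j}$. I would also note explicitly that moving the constant matrix factors across $D^{-1}$ is legitimate precisely because $D^{-1}$ acts on functions and commutes with multiplication by $x$-independent (constant) matrices; nothing here depends on the $\tau_k,t_n$ dependence of $\Phi,\Psi$. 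Everything else is a telescoping computation, so this is the entire argument.
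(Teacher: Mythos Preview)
Your proof is correct and follows essentially the same approach as the paper: both combine Lemma~1 (the algebraic decomposition of ${\cal M}_{l+1}$) with Corollary~1 (the eigenfunction relations $\hat{L}_k\{\Phi\}=\Phi C\Lambda C^{-1}$, $\hat{L}_k^{\tau}\{\Psi\}=\Psi C^{\top}\tilde{\Lambda}C^{\top,-1}$) and the fact that constant matrices commute with $D^{-1}$. The paper simply writes the chain of equalities in one line starting from $\Phi{\cal M}_{l+1}D^{-1}\Psi^{\top}$, whereas you work from the right-hand side and spell out the transpose bookkeeping more carefully---but the substance is identical.
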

\begin{proof}
Lemma 2 is a consequence of Corollary 1 and formula (\ref{lemma11})
of Lemma 1. Namely, the following relations hold:
\begin{equation}\nonumber
\Phi{\cal
M}_{l+1}D^{-1}\Psi^{\top}=\sum_{s=0}^l{\Phi}C\Lambda^sC^{-1}{\cal
M}_1C^{-1}D^{-1}(\tilde{\Lambda}^{\top})^{l-s}C\Psi^{\top}=\sum_{s=0}^{l}\Phi[s]{\cal
M}_{1}D^{-1}\Psi^{\top}[l-s].
\end{equation}
\end{proof}
Now we assume that the functions $\varphi$ and $\psi$ in addition to
equations (\ref{pr}) satisfy the equations:
\begin{equation}\label{prm}
M_{n,l}\{\varphi\}=c_l\varphi\Lambda^{l+1}=c_lL_k^{l+1}\{\varphi\},\,\,M_{n,l}^{\tau}\{\psi\}=c_l\psi\tilde{\Lambda}^{l+1}=c_l(L_k^{\tau})^{l+1}\{\psi\}.
\end{equation}
Problems (\ref{prm}) can be rewritten via the operator
$\tilde{M}_{n,l}$ (\ref{tM}) as:
\begin{equation}
\tilde{M}_{n,l}\{\varphi\}=0,\,\,\,
\tilde{M}_{n,l}^{\tau}\{\psi\}=0.
\end{equation}
 The following theorem for the operators $M_{n,l}$ (\ref{ex2+1}) and
 $\tilde{M}_{n,l}$ (\ref{tM}) holds:
\begin{theorem}\label{M}
Let $N\times K$ -matrix functions $\varphi$, $\psi$ be solutions of
problems (\ref{pr}) and (\ref{prm}). The transformed operator
$\hat{M}_{n,l}:=WM_{n,l}W^{-1}$ obtained via BDT $W$ (\ref{W}) has
the form:
\begin{equation}\label{Mop}
\begin{array}{l}
\hat{M}_{n,l}:=WM_{n,l}W^{-1}=\alpha_n\partial_{t_n}-\hat{A}_n-c_l\sum_{j=0}^l\hat{{\bf
q}}[j]{\cal M}_0D^{-1}\hat{{\bf
r}}^{\top}[l-j]+\\+c_l\sum_{s=0}^{l}\Phi[s]{\cal
M}_{1}D^{-1}\Psi^{\top}[l-s],\,\,\hat{A}_n=\sum_{i=0}^{n}\hat{v}_iD^i,
\end{array}
\end{equation}
where the matrix ${\cal M}_n$ and the functions $\hat{{\bf q}}$,
$\hat{{\bf r}}$, $\Phi[s]$, $\Psi[l-s]$ are defined by formulae
(\ref{DSM}), (\ref{fk}) and $\hat{{\bf q}}[j]$, $\hat{{\bf r}[j]}$
have the form
\begin{equation}
\hat{{\bf q}}[j]=(\hat{L}_k^j)\{\hat{\bf q}\},\,\,\, \hat{{\bf
r}}[j]=(\hat{L}_k^{j})^{\tau}\{{\hat{\bf r}}\},
\end{equation}
 $\hat{v}_i$ are
$N\times N$-matrix coefficients that depend on the functions
$\varphi$, $\psi$ and $v_i$. The transformed operator
${\hat{\tilde{M}}}_{n,l}=W\tilde{M}_{n,l}W^{-1}$ has the form:
\begin{equation}\label{tmh}
\hat{{\tilde{M}}}_{n,l}=W\tilde{M}_{n,l}W^{-1}=\hat{M}_{n,l}-c_l(\hat{L}_k)^{l+1},
\end{equation}
where $\hat{L}_k$ is given by (\ref{Lop}).
\end{theorem}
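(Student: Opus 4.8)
The plan is to prove Theorem~\ref{M} by conjugating $M_{n,l}$ with the binary Darboux transformation $W$ and carefully keeping track of how each piece of $M_{n,l}$ transforms. First I would write $M_{n,l}=\alpha_n\partial_{t_n}-A_n-c_l\sum_{j=0}^l {\bf q}[j]{\cal M}_0D^{-1}{\bf r}^{\top}[l-j]$ and conjugate term by term. The key structural input is Theorem~\ref{2009} applied not to $L_k$ alone but to the analogous integro-differential operator $M_{n,l}$: the proof in \cite{K2009} of how a BDT acts on an operator of the form ``(evolution derivative) $-$ (differential part) $-$ (sum of rank-one integral tails)'' is purely algebraic in the structure of the operator and the defining relations of $\varphi,\psi$, so it applies verbatim once we know the eigenvalue-type relations that $\varphi$ and $\psi$ satisfy with respect to $M_{n,l}$. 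Those relations are exactly \eqref{prm}, i.e. $M_{n,l}\{\varphi\}=c_l\varphi\Lambda^{l+1}$, $M_{n,l}^{\tau}\{\psi\}=c_l\psi\tilde{\Lambda}^{l+1}$, which is why the hypothesis of the theorem is stated in that form.

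Concretely, the steps are as follows. (i) The differential part $\alpha_n\partial_{t_n}-A_n$ conjugates under $W$ into $\alpha_n\partial_{t_n}-\hat A_n$ with $\hat A_n=\sum_{i=0}^n\hat v_iD^i$, by the same computation as in Theorem~\ref{2009} for the $\beta_k\partial_{\tau_k}-B_k$ block; this produces the coefficients $\hat v_i$ depending on $\varphi,\psi,v_i$. (ii) Each integral tail ${\bf q}[j]{\cal M}_0D^{-1}{\bf r}^{\top}[l-j]$, where ${\bf q}[j]=L_k^j\{{\bf q}\}$ and ${\bf r}^{\top}[j]=((L_k^{\tau})^j\{{\bf r}\})^{\top}$, conjugates into $\hat{\bf q}[j]{\cal M}_0D^{-1}\hat{\bf r}^{\top}[l-j]$ with $\hat{\bf q}=W\{{\bf q}\}$, $\hat{\bf r}=W^{-1,\tau}\{{\bf r}\}$; here one uses that $W\,{\bf q}[j]\,W^{-1}=W L_k^j\{{\bf q}\}\cdots$ transforms compatibly because $\hat L_k=WL_kW^{-1}$, so $\hat L_k^j\{\hat{\bf q}\}=W\{L_k^j\{{\bf q}\}\}=W\{{\bf q}[j]\}=\hat{\bf q}[j]$, and symmetrically for $\hat{\bf r}[j]$ using $W^{-1,\tau}L_k^{\tau}W^{\tau}=\hat L_k^{\tau}$. (iii) The ``new'' term $\Phi{\cal M}_1D^{-1}\Psi^{\top}$ that appears when conjugating $L_k$ (Theorem~\ref{2009}) now appears with a higher power: conjugating $M_{n,l}$ produces $c_l\,\Phi{\cal M}_{l+1}D^{-1}\Psi^{\top}$ because $\varphi,\psi$ are $\Lambda^{l+1}$- resp. $\tilde\Lambda^{l+1}$-eigenfunctions of $M_{n,l}$ with factor $c_l$; then Lemma~2 rewrites $\Phi{\cal M}_{l+1}D^{-1}\Psi^{\top}=\sum_{s=0}^l\Phi[s]{\cal M}_1D^{-1}\Psi^{\top}[l-s]$, giving the displayed form \eqref{Mop}. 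Finally, \eqref{tmh} is immediate: $W\tilde M_{n,l}W^{-1}=W(M_{n,l}-c_l L_k^{l+1})W^{-1}=\hat M_{n,l}-c_l(WL_kW^{-1})^{l+1}=\hat M_{n,l}-c_l\hat L_k^{l+1}$, since conjugation is an algebra homomorphism.

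The main obstacle is verifying the exact bookkeeping in step (iii): one must check that when the BDT $W$ acts on the integral tails $\sum_{j=0}^l{\bf q}[j]{\cal M}_0D^{-1}{\bf r}^{\top}[l-j]$ together with the differential part, the leftover non-local term assembles precisely into $c_l\,\Phi{\cal M}_{l+1}D^{-1}\Psi^{\top}$ with the matrix ${\cal M}_{l+1}=C\Lambda^{l+1}-(\tilde\Lambda^{\top})^{l+1}C$ from Lemma~1, rather than into some more complicated expression. This is where the eigenvalue relations \eqref{prm} are essential: they guarantee that $M_{n,l}\{\varphi\}$ and $M_{n,l}^{\tau}\{\psi\}$ contribute the ``pure power'' $\Lambda^{l+1}$, $\tilde\Lambda^{l+1}$ needed to match Theorem~\ref{2009}'s mechanism, which for $L_k$ produced $C\Lambda-\tilde\Lambda^{\top}C={\cal M}_1$. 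I would invoke Theorem~\ref{2009} (applied to $M_{n,l}$ in place of $L_k$, with the role of $C\Lambda-\tilde\Lambda^{\top}C$ played by $c_l(C\Lambda^{l+1}-(\tilde\Lambda^{\top})^{l+1}C)=c_l{\cal M}_{l+1}$) to get the result in closed form, then apply Lemma~2 to split ${\cal M}_{l+1}$. The remaining computations — identifying $\hat{\bf q}[j]$, $\hat{\bf r}[j]$, and the $\hat v_i$ — are routine consequences of $\hat L_k=WL_kW^{-1}$ and the homomorphism property, and need no separate argument.
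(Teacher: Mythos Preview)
Your proposal is correct and follows essentially the same route as the paper: apply Theorem~\ref{2009} to $M_{n,l}$ (using the eigenvalue relations \eqref{prm} so that the ``new'' integral term carries $c_l{\cal M}_{l+1}$), identify $W\{{\bf q}[j]\}=\hat L_k^{\,j}\{\hat{\bf q}\}=\hat{\bf q}[j]$ via $\hat L_k=WL_kW^{-1}$, and then invoke Lemma~2 to split $\Phi{\cal M}_{l+1}D^{-1}\Psi^{\top}$; formula~\eqref{tmh} is, as you say, immediate from the homomorphism property of conjugation.

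The one difference worth noting concerns exactly the point you flag as the ``main obstacle''. Rather than arguing that the proof mechanism of Theorem~\ref{2009} carries over to an operator with several integral tails, the paper first collapses the sum $\sum_{j=0}^l{\bf q}[j]{\cal M}_0D^{-1}{\bf r}^{\top}[l-j]$ into a single tail $\tilde{\bf q}\,\tilde{\cal M}_0\,D^{-1}\tilde{\bf r}^{\top}$ by setting $\tilde{\bf q}=({\bf q}[0],\ldots,{\bf q}[l])$, $\tilde{\bf r}=({\bf r}[l],\ldots,{\bf r}[0])$, and $\tilde{\cal M}_0$ the block-diagonal matrix with $l{+}1$ copies of ${\cal M}_0$. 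With this packaging $M_{n,l}$ has exactly the shape of $L_k$ in Theorem~\ref{2009}, so that theorem applies as a black box (with $\Lambda$, $\tilde\Lambda$ replaced by $c_l\Lambda^{l+1}$, $c_l\tilde\Lambda^{l+1}$), and the bookkeeping you were worried about becomes automatic. Unpacking $\hat{\tilde{\bf q}}\,\tilde{\cal M}_0\,D^{-1}\hat{\tilde{\bf r}}^{\top}$ back into the sum is then just your step~(ii). This packaging trick is a cleaner way to justify the step you left to the internal structure of the proof of Theorem~\ref{2009}; otherwise the arguments coincide.
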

\begin{proof}
We shall rewrite the operator $M_{n,l}$ (\ref{ex2+1}) in the form
\begin{equation}\label{Mn1ns}
M_{n,l}=\alpha_n\partial_{t_n}-\sum_{i=0}^{n}v_iD^i-c_l\tilde{{\bf
q}}\tilde{{\cal M}}_0D^{-1}\tilde{\bf r}^{\top},
\end{equation}
where $\tilde{{\cal M}}_0$ is an $m(l+1)\times m(l+1)$-
block-diagonal matrix with entries of ${\cal M}_0$ at the diagonal;
$\tilde{{\bf q}}:=({\bf q}[0],{\bf q}[1],\ldots,{\bf q}[l])$,
$\tilde{{\bf r}}:=({\bf r}[l],{\bf r}[l-1],\ldots,{\bf r}[0])$.
Using Theorem \ref{2009} we obtain that
\begin{equation}\label{fgre}
\hat{M}_{n,l}=\alpha_n\partial_{t_n}-\sum_{i=0}^{n}\hat{v}_iD^i-c_l\hat{\tilde{{\bf
q}}}\tilde{{\cal M}}_0D^{-1}\hat{\tilde{{\bf r}}}^{\top} +\Phi{\cal
M}_{l+1}D^{-1}\Psi^{\top},
 \end{equation}
where $\hat{\tilde{{\bf q}}}=W\{\tilde{{\bf q}}\}$,
$\hat{\tilde{{\bf q}}}=W^{-1,\tau}\{{\tilde{\bf r}}\}$. Using the
exact form of ${\tilde {\bf q}}$ and $\tilde{\bf r}$ we have
\begin{equation}
\begin{array}{l}
\hat{\tilde{{\bf q}}}=W\{\tilde{{{\bf q}}}\}=(W\{{\bf
q}[0]\},\ldots, W\{{\bf q}[l]\}),\,\\ \hat{\tilde{{\bf
r}}}=W^{-1,\tau}\{\tilde{{{\bf r}}}\}=(W^{-1,\tau}\{{\bf
r}[l]\},\ldots, W^{-1,\tau}\{{\bf r}[0]\}).
\end{array}
\end{equation}
We observe that
\begin{equation}
W\{{\bf q}[i]\}=WL^i\{{\bf q}\}=WL^iW^{-1}\{W\{{\bf
q}\}\}=\hat{L}^i\{\hat{{\bf q}}\}=:\hat{{\bf q}}[i].
\end{equation}
It can be shown analogously that
 $W^{-1,\tau}\{{\bf
r}[i]\}=\hat{L}^{\tau,i}\{W^{-1,\tau}\{{\bf
r}\}\}=\hat{L}^{\tau,i}\{\hat{\bf r}\}=:\hat{\bf r}[i]$. Thus we
have:
\begin{equation}\label{sd}
 \hat{\tilde{{\bf
q}}}\tilde{{\cal M}}_0D^{-1}\hat{\tilde{{\bf
r}}}^{\top}=\sum_{j=0}^l\hat{{\bf q}}[j]{\cal M}_0D^{-1}\hat{{\bf
r}}^{\top}[l-j].
\end{equation}
For the last item in (\ref{fgre}) from Lemma 2 we have:
\begin{equation}\label{sdd}
\Phi{\cal M}_{l+1}D^{-1}\Psi^{\top}=\sum_{s=0}^{l}\Phi[s]{\cal
M}_{1}D^{-1}\Psi^{\top}[l-s].
\end{equation}
Using formulae (\ref{fgre}), (\ref{sd}), (\ref{sdd}) we obtain that
the operator $\hat{M}_{n,l}$ has form (\ref{Mop}). The exact form of
the operator $\hat{\tilde{M}}_{n,l}$ follows from formula
(\ref{Mop}) and Theorem \ref{2009}.
\end{proof}
From Theorem \ref{M} we obtain the following corollary.
\begin{corollary}\label{Corol}
Assume that functions $\varphi$ and $\psi$ satisfy problems
(\ref{pr}) and (\ref{prm}). Then the functions
$\Phi=W\{\varphi\}C^{-1}$ and $\Psi=W^{-1,\tau}\{\psi\}C^{\top,-1}$
(see formulae (\ref{DSM})) satisfy the equations:
\begin{equation}\label{meq}
\hat{\tilde{M}}_{n,l}\{\Phi\}=\hat{M}_{n,l}\{\Phi\}-c_l(\hat{L}_k)^{l+1}\{\Phi\}=0,
\,\,\hat{\tilde{M}}_{n,l}^{\tau}\{\Psi\}=\hat{M}^{\tau}_{n,l}\{\Psi\}-c_l(\hat{L}^{\tau}_k)^{l+1}\{\Psi\}=0,
\end{equation}
where the operators $\hat{L}_k$, $\hat{M}_{n,l}$ and
$\hat{\tilde{M}}_{n,l}$ are defined by (\ref{Lop}), (\ref{Mop}) and
(\ref{tmh}).
\end{corollary}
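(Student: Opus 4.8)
The plan is to follow the pattern of the corollary of Theorem~\ref{2009}, now exploiting the extra linear problems (\ref{prm}) and the decomposition (\ref{tmh}) of $\hat{\tilde{M}}_{n,l}$ supplied by Theorem~\ref{M}. The only inputs are the two intertwining identities contained in the statement, $\Phi=W\{\varphi\}C^{-1}$ and $\Psi=W^{-1,\tau}\{\psi\}C^{\top,-1}$, together with the repeatedly used observation that the constant matrices $C$, $C^{\top}$ commute freely through $W^{\pm1}$, $W^{\pm1,\tau}$ and through every coefficient operator of $L_k$, $M_{n,l}$ and their transposes. In particular $W^{-1}\{\Phi\}=\varphi C^{-1}$, and dually $W^{\tau}\{\Psi\}=\psi C^{\top,-1}$, where the latter uses that $W^{\tau}$ is the operator inverse of $W^{-1,\tau}$ (apply transposition to $W^{-1}W=WW^{-1}=I$).

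For $\Phi$: conjugation by $W$ acts pointwise, so $\hat{\tilde{M}}_{n,l}\{\Phi\}=W\bigl\{\tilde{M}_{n,l}\{W^{-1}\{\Phi\}\}\bigr\}=W\bigl\{\tilde{M}_{n,l}\{\varphi\}\bigr\}C^{-1}$. By (\ref{pr}) we have $L_k^{l+1}\{\varphi\}=\varphi\Lambda^{l+1}$, so (\ref{prm}) says exactly that $\tilde{M}_{n,l}\{\varphi\}=(M_{n,l}-c_l L_k^{l+1})\{\varphi\}=0$ (the reformulation of (\ref{prm}) recorded right after it). Hence $\hat{\tilde{M}}_{n,l}\{\Phi\}=0$, and since (\ref{tmh}) gives the operator identity $\hat{\tilde{M}}_{n,l}=\hat{M}_{n,l}-c_l(\hat{L}_k)^{l+1}$, this is precisely the first displayed chain $\hat{\tilde{M}}_{n,l}\{\Phi\}=\hat{M}_{n,l}\{\Phi\}-c_l(\hat{L}_k)^{l+1}\{\Phi\}=0$.

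For $\Psi$ I would transpose the relation $\hat{\tilde{M}}_{n,l}=W\tilde{M}_{n,l}W^{-1}$. Since transposition reverses the order of composition, $\hat{\tilde{M}}_{n,l}^{\tau}=W^{-1,\tau}\tilde{M}_{n,l}^{\tau}W^{\tau}$, hence $\hat{\tilde{M}}_{n,l}^{\tau}\{\Psi\}=W^{-1,\tau}\bigl\{\tilde{M}_{n,l}^{\tau}\{\psi\}\bigr\}C^{\top,-1}$. Because reversing the order of identical factors does nothing, $(L_k^{l+1})^{\tau}=(L_k^{\tau})^{l+1}$, so $\tilde{M}_{n,l}^{\tau}=M_{n,l}^{\tau}-c_l(L_k^{\tau})^{l+1}$; combining the second relation in (\ref{prm}) with $(L_k^{\tau})^{l+1}\{\psi\}=\psi\tilde{\Lambda}^{l+1}$ (from (\ref{pr})) gives $\tilde{M}_{n,l}^{\tau}\{\psi\}=0$, whence $\hat{\tilde{M}}_{n,l}^{\tau}\{\Psi\}=0$, and again (\ref{tmh}) turns this into the second displayed chain.

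The argument is essentially bookkeeping. The one delicate point is the transposition step: one must correctly track the reversal of the operator product and be sure that $W^{\tau}$ and $W^{-1,\tau}$ really are mutually inverse operators — i.e., that the formal identities $WW^{-1}=W^{-1}W=I$ in the pseudodifferential algebra survive transposition — after which the harmless commutation of $C$, $C^{\top}$ past all the operators in sight does the rest.
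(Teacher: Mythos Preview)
Your argument is correct and is exactly the approach the paper intends: the paper does not spell out a proof, merely stating that the corollary follows from Theorem~\ref{M}, and your conjugation-plus-transposition computation (mirroring the pattern of Corollary~1 after Theorem~\ref{2009}) is precisely how one extracts it. The only points to be careful about---that right multiplication by the constant $K\times K$ matrices $C^{-1}$, $C^{\top,-1}$ commutes with the $N\times N$-coefficient pseudodifferential operators, and that $(W^{-1})^{\tau}=(W^{\tau})^{-1}$---you have identified and handled correctly.
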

As it was shown in Sections \ref{kckp}-\ref{extended} the most
interesting systems arise from the (2+1)-dimensional k-cKP hierarchy
(\ref{spa1}) and its extension (\ref{ex2+1}) after a Hermitian
conjugation reduction. Our aim is to show that under additional
restrictions Binary Darboux Transformation $W$ (\ref{W}) preserves
this reduction. We shall point out that a differential dressing
operator that was used in \cite{MSS,6SSS,LZL2} does not satisfy such
a property. It imposes nontrivial constraints on the dressing
operator (see, for example, \cite{BS2}).
The following proposition holds.
\begin{proposition}\label{her}

\begin{enumerate}
\item
Let $\psi=\bar{\varphi}$ and $C=C^*$ in the dressing operator $W$
(\ref{W}). Then the operator $W$ is unitary ($W^*=W^{-1}$).
\item
Let the operator $L_k$ (\ref{ex2+1}) be Hermitian (skew-Hermitian)
and $M_{n,l}$ (\ref{ex2+1}) be Hermitian (skew-Hermitian). Then the
operator $\hat{L}_k=WL_kW^{-1}$ (see (\ref{Lop})) transformed via
the unitary operator $W$ is Hermitian (skew-Hermitian) and
$\hat{M}_{n,l}:=WM_{n,l}W^{-1}$ (\ref{tmh}) is Hermitian
(skew-Hermitian).
\item Assume that the conditions of items 1 and 2 hold. Let
$\tilde{\Lambda}=\bar{\Lambda}$ in the case of Hermitian $L_k$
($\tilde{\Lambda}=-\bar{\Lambda}$ in skew-Hermitian case). We shall
also assume that the function $\varphi$ satisfies the corresponding
equations in formulae (\ref{pr}) and (\ref{prm}). Then ${\cal
M}_1=-{\cal M}_1^*$ (${\cal M}_1={\cal M}_1^*$) and
$\Psi=\bar{\Phi}$ (see formulae (\ref{DSM})).
\end{enumerate}
\end{proposition}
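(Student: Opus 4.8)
The plan is to verify the three claims in order, since each builds on the previous one. For item 1, I would write out $W$ and $W^{-1}$ from formulae (\ref{W})--(\ref{W-}) and compute $W^*$ directly. Recall that $(D^{-1})^* = -D^{-1}$ and $(D^{-1}\{h\})^* $ involves integration by parts; the key observation is that the formal adjoint of the integral operator $\varphi\Delta^{-1}D^{-1}\psi^{\top}$ produces $\psi\,(\Delta^{-1})^*\,(-D^{-1})\,\varphi^{\top}$. Under $\psi=\bar\varphi$ we have $\psi^{\top}=\varphi^{*}$ and $\varphi^{\top}=\bar\psi^{\top}=\psi^{*}$, so the adjoint of $D^{-1}\{\psi^{\top}\varphi\}$ is $-D^{-1}\{\psi^{\top}\varphi\}$ up to the constant of integration; combined with $C=C^{*}$ this gives $\Delta^{*}$ equal to $C - D^{-1}\{\psi^\top\varphi\}$ evaluated appropriately, and a short manipulation identifies $W^{*}$ with the expression (\ref{W-}) for $W^{-1}$. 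So item 1 is essentially a bookkeeping computation with formal adjoints.

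For item 2, I would use that conjugation by a unitary operator preserves (skew-)Hermiticity of any pseudodifferential operator: if $A^{*}=\pm A$ and $W^{*}=W^{-1}$, then $(WAW^{-1})^{*}=W^{-1,*}A^{*}W^{*}=W A^{*} W^{-1}=\pm WAW^{-1}$. This applies verbatim to $A=L_k$ giving $\hat L_k$ and to $A=M_{n,l}$ giving $\hat M_{n,l}$, using Theorems \ref{2009} and \ref{M} to know these transformed operators are again of the stated form. No real obstacle here; it is a one-line algebraic identity once item 1 is in hand.

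Item 3 is the substantive part. Under the stated hypothesis (say $L_k$ skew-Hermitian, $\tilde\Lambda=-\bar\Lambda$, and $\varphi$ solving (\ref{pr}) and (\ref{prm})), I would check first that $\psi:=\bar\varphi$ automatically solves the adjoint problems $L_k^{\tau}\{\psi\}=\psi\tilde\Lambda$ and $M_{n,l}^{\tau}\{\psi\}=c_l\psi\tilde\Lambda^{l+1}$: indeed $L_k^{\tau}=\overline{L_k^{*}}=-\bar L_k$ in the skew-Hermitian case, so applying complex conjugation to $L_k\{\varphi\}=\varphi\Lambda$ yields $\bar L_k\{\bar\varphi\}=\bar\varphi\bar\Lambda$, i.e. $L_k^{\tau}\{\psi\}=-\psi\bar\Lambda=\psi\tilde\Lambda$, and similarly for $M_{n,l}$ using $(L_k^{\tau})^{l+1}$. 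Then from $\Delta=C+D^{-1}\{\psi^{\top}\varphi\}=C+D^{-1}\{\varphi^{*}\varphi\}$ with $C=C^{*}$ one gets $\Delta^{*}=$ the analogous expression with $-D^{-1}$, which after choosing the integration constant consistently shows $\Psi=\psi\Delta^{-1,\top}=\bar\varphi\,\overline{\Delta^{-1}}=\overline{\varphi\Delta^{-1}}=\bar\Phi$. Finally, for ${\cal M}_1=C\Lambda-\tilde\Lambda^{\top}C$: substitute $\tilde\Lambda=-\bar\Lambda$ to obtain ${\cal M}_1=C\Lambda+\bar\Lambda^{\top}C=C\Lambda+\Lambda^{*}C$, and since $C=C^{*}$, $({\cal M}_1)^{*}=\Lambda^{*}C+C\Lambda={\cal M}_1$, i.e. ${\cal M}_1={\cal M}_1^{*}$ in the skew-Hermitian case (and the sign flips to ${\cal M}_1=-{\cal M}_1^{*}$ in the Hermitian case with $\tilde\Lambda=\bar\Lambda$). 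The main obstacle I anticipate is keeping the constants of integration in $D^{-1}\{\cdot\}$ under control when taking adjoints — one has to argue that the constant can be absorbed into $C$ without spoiling $C=C^{*}$ — and making sure the Hermitian versus skew-Hermitian sign conventions are tracked consistently through $L_k$, $\tilde\Lambda$, ${\cal M}_1$ and $\Psi$ simultaneously.
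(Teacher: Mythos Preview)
Your approach is essentially the paper's own: the paper's proof is just one sentence for item 1 (``easy to check'' from (\ref{W})--(\ref{W-})), the identical one-line conjugation identity $(WAW^{-1})^{*}=W^{-1,*}A^{*}W^{*}=WAW^{-1}$ for item 2, and ``direct calculations'' for item 3, so your sketch is filling in what the paper leaves implicit.

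One slip to correct in your item 1: $\Delta=C+D^{-1}\{\varphi^{*}\varphi\}$ is Hermitian as a matrix-valued \emph{function} (since $C=C^{*}$ and the integrand $\varphi^{*}\varphi$ is Hermitian), so $\Delta^{*}=\Delta$, not $C-D^{-1}\{\psi^{\top}\varphi\}$. The minus sign from $(D^{-1})^{*}=-D^{-1}$ attaches to the \emph{operator} $D^{-1}$ appearing in $W$ and cancels the explicit minus in front, giving directly $W^{*}=I+\varphi\,D^{-1}(\Delta^{*})^{-1}\psi^{\top}=I+\varphi\,D^{-1}\Delta^{-1}\psi^{\top}=W^{-1}$. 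With this fix your worry about integration constants disappears as well.
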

\begin{proof}
By using formulae (\ref{W}) and (\ref{W-}) it is easy to check that
$W$ is unitary (in the case ${\psi}=\bar{\varphi}$, $C=C^*$). Assume
that $L_k=L_k^*$ (in the case of  skew-Hermitian $L_k$ analogous
considerations can be applied). Then we have
$(\hat{L}_k)^*=W^{-1,*}L^*_kW^{*}=WL_kW^{-1}=\hat{L}_k$. Analogously
we can show that $W$ maintains Hermitian (skew-Hermitian) property
for $M_{n,l}$. The formulae from item 3 can be checked by direct
calculations.
\end{proof}
 Now we list several examples of dressing for Lax pairs and
the corresponding equations from Section \ref{extended}.

Consider dressing methods for equations connected with the operator
$L_0$ (\ref{L0}). Assume that $\varphi$ and $\psi$ are $N\times
K$-matrix functions that satisfy the equations
\begin{equation}\label{Leq1}
 L_0\{\varphi\}=\varphi\Lambda,\,\,L_0^{\tau}\{\psi\}=\psi\tilde{\Lambda},\,\,L_0:=\beta\partial_y.
 \end{equation}
 By Theorem \ref{2009} we obtain that the dressed operator $\hat{L}_0$
 via BDT $W$ (\ref{W}) has the form
 \begin{equation}\label{hL0}
 \hat{L}_0=WL_0W^{-1}=\beta\partial_y+\Phi{\cal
 M}_1D^{-1}\Psi^{\top}.
 \end{equation}
 \begin{enumerate}
\item $n=2$, $l=1$.
Assume that $N\times K$-matrix functions $\varphi$ and $\psi$ in
addition to equations (\ref{Leq1}) also satisfy the equations
\begin{equation}\label{M2e}
 M_2\{\varphi\}=c_1\varphi\Lambda^2=c_1L_0^2\{\varphi\},\,\,M_2^{\tau}\{\psi\}=c_1\psi{\tilde\Lambda}^2=c_1(L_0^{\tau})^2\{\psi\},\,
 M_2:=\alpha_2\partial_{t_2}-D^2.
\end{equation}
By Theorem \ref{M} we obtain that the transformed operator
$\hat{M}_2$ has the form
\begin{equation}
\hat{M}_2=WM_2W^{-1}=\alpha_2\partial_{t_2}-D^2-\hat{v}_0+\hat{L}_0\{\Phi\}{\cal
 M}_1D^{-1}\Psi^{\top}+\Phi{\cal
 M}_1D^{-1}((\hat{L}_0^{\tau})\{\Psi\})^{\top}.
\end{equation}
By direct calculations it can be obtained that
$\hat{v}_0=2(\varphi\Delta^{-1}\psi^{\top})_x$,
$\Delta=C+D^{-1}\{\psi^{\top}\varphi\}$. It can be easily checked
that
\begin{equation}\label{a}
\begin{array}{l}
\beta(\varphi\Delta^{-1}\psi^{\top})_y=\beta\varphi_y\Delta^{-1}\psi^{\top}-\beta\varphi\Delta^{-1}D^{-1}\{\psi^{\top}\varphi\}_y\Delta^{-1}\psi^{\top}+\beta\varphi\Delta^{-1}\psi^{\top}_y=\\
=\varphi\Lambda\Delta^{-1}\psi^{\top}-\beta\varphi\Delta^{-1}D^{-1}\{\psi^{\top}\varphi\}_y\Delta^{-1}\psi^{\top}-\varphi\Delta^{-1}\tilde{\Lambda}^{\top}\psi^{\top}
=\\= \varphi\Delta^{-1}(C\Lambda+
D^{-1}\{\psi^{\top}\varphi\}\Lambda)\Delta^{-1}\psi^{\top}-\beta\varphi\Delta^{-1}D^{-1}\{\psi^{\top}\varphi\}_y\Delta^{-1}\psi^{\top}+\\+\varphi\Delta^{-1}
(-\tilde{\Lambda}^{\top}C-
\tilde{\Lambda}^{\top}D^{-1}\{\psi^{\top}\varphi\})\Delta^{-1}\psi^{\top}=
\\= \varphi\Delta^{-1}(C\Lambda+\beta
D^{-1}\{\psi^{\top}\varphi_y\})\Delta^{-1}\psi^{\top}-\beta\varphi\Delta^{-1}D^{-1}\{\psi^{\top}\varphi\}_y\Delta^{-1}\psi^{\top}+\\+\varphi\Delta^{-1}
(-\tilde{\Lambda}^{\top}C+\beta
D^{-1}\{\psi^{\top}_y\varphi\})\Delta^{-1}\psi^{\top}=\Phi {\cal
M}_1\Psi^{\top}.
\end{array}
\end{equation}
From the latter formula we obtain that
\begin{equation}\label{adS}
\beta\hat{v}_{0y}=2\beta(\varphi\Delta^{-1}\psi^{\top})_{xy}=2(\Phi{\cal
M}_1\Psi^{\top})_x.
\end{equation}
From Corollary \ref{Corol} we see that the functions
$\Phi=\varphi\Delta^{-1}$ and $\Psi$$=\psi\Delta^{\top,-1}$ where
$\Delta=C+D^{-1}\{\psi^{\top}\varphi\}$ (see formulae (\ref{DSM}))
 satisfy equations (\ref{meq}). After the change ${\bf q}:=\Phi$,
${\bf r}:=\Psi$, ${\cal M}_0:=-{\cal M}_1$, $v_0:=\hat{v}_0$ from
formulae (\ref{meq}) and (\ref{adS}) we obtain that $N\times
K$-matrix functions ${\bf q}$, ${\bf r}$, an $N\times N$-matrix
function $v_0$ and a $K\times K$-matrix ${\cal M}_0$ satisfy
equations (\ref{DSnr}) in the case $c=1$. In the case of additional
reductions in formulae (\ref{Leq1})-(\ref{M2e}): $\alpha_2\in
i{\mathbb{R}}$, $\beta\in{\mathbb{R}}$, $c_1\in{\mathbb{R}}$,
$\tilde{\Lambda}=-\bar{\Lambda}$, $\psi=\bar{\varphi}$ and $C=C^*$
in gauge transformation operator $W$ (\ref{W}) from Proposition
\ref{her} we obtain that the functions ${\bf q}:=\Phi$ and
$v_0=\hat{v}_0=2(\varphi\Delta^{-1}\varphi^*)_x$ satisfy matrix DS
system (\ref{DS}) in the case $c=1$.

Consider more precisely the exact solutions of a vector version
($N=1$) of DS system (\ref{DS}) in the case $c=1$.
 Assume that the $K\times K$-matrix
$\Lambda$ in (\ref{Leq1}) is diagonal:
$\Lambda={\rm{diag}}(\lambda_1,\ldots,\lambda_K)$,
$\lambda_j\in{\mathbb{C}}$. Let us fix arbitrary natural numbers
$K_1$, $\ldots$, $K_m$. Denote by $K$ the number $K=K_1+\ldots+K_m$.
 Our aim is to present the exact form of solution of $m$-component DS system (\ref{DS}) (${\bf q}=(q_1,\ldots
q_m)$) 
Assume that a $K\times K$-matrix $C$ has the form:
\begin{equation}\label{C}
C={\rm{diag}}(C_{K_1},\ldots,C_{K_m}),\,\,
C_{K_s}=\left(\mu_s\frac{1}{\lambda_j+\bar{\lambda}_i}\right)_{i,j=1}^{K_s},\,\,\mu_s=\pm1,\,\,s=\overline{1,m}.
\end{equation}
I.e., the matrix $C$ has diagonal blocks $C_{K_s}$ of dimension
$K_s$ on its diagonal.
$\mu_s$, $s=\overline{1,m}$, are numbers each equal to $1$ or $-1$.
It can be checked by direct calculations that the matrix ${\cal
M}_1:=C\Lambda+\Lambda^*C$ has the form:
\begin{equation}\label{Mat}
{\cal M}_1=\left(\begin{array}{cccccc}\mu_1E_{K_1}&0_{K_1,K_2}&0_{K_1,K_3}&\ldots&0_{K_1,K_{m-1}}&0_{K_1,K_m}\\
0_{K_2,K_1}&\mu_2E_{K_2}&0_{K_2,K_3}&\ldots&0_{K_2,K_{m-1}}&0_{K_2,K_{m}}\\\
0_{K_3,K_1}&0_{K_3,K_2}&\mu_3E_{K_3}&\ldots&,0_{K_3,K_{m-1}}&0_{K_3,K_{m}}\\
\vdots&\vdots&\vdots&\ldots&\vdots&\vdots\\
0_{K_{m-1},K_1}&0_{K_{m-1},K_2}&0_{K_{m-1},K_3}&\ldots&\mu_{m-1}E_{K_{m-1}}&0_{K_{m-1},K_{m}}\\\
0_{K_m,K_1}&0_{K_m,K_2}&0_{K_m,K_3}&\ldots&
0_{K_m,K_{m-1}}&\mu_mE_{K_m}
\end{array}\right),
\end{equation}
where by $E_{K_s}$ we denote the $K_s\times K_s$-square matrix
consisting of 1. I.e., $E_{K_s}={\bf 1}_{K_s}^{\top}{\bf1}_{K_s}$,
where  ${\bf 1}_{K_s}=(1,\ldots,1)$ is $1\times K_s$-vector
consisting of $1$. $0_{K_i,K_j}$ is the $K_i\times K_j$-matrix
consisting of zeros.  The matrix ${\cal M}_1$ can be rewritten as
${\cal M}_1={\rm{diag}}(\mu_1{\bf 1}_{K_1}^{\top}{\bf
1}_{K_1},\ldots,\mu_m{\bf 1}_{K_m}^{\top}{\bf 1}_{K_m})$. It can be
noticed that the matrix ${\cal M}_1$ admits the factorization:
\begin{equation}\label{factor}
\begin{array}{l} {\cal M}_1=P\sigma P^*,\,\,\
P=(e_{K_1},\ldots,e_{K_m}),\sigma={\rm{diag}}(\mu_1,\ldots,\mu_m),\\
e_{K_j}=(0_{K_1},\ldots, 0_{K_{j-1}},{\bf
1}_{K_j},0_{K_{j-1}},\ldots,0_{K_l})^{\top}.
\end{array}
\end{equation}

In formula (\ref{factor}) by $0_{K_j}=(0,\ldots,0)$ we denote the
$1\times K_j$-vector consisting of zeros. $e_{K_j}$ denotes the
$1\times K$-vector consisting of row vectors $0_{K_j}$ and ${\bf
1}_{K_j}$.
\begin{Example}
Consider the case $m=2$, $K_1=1$, $K_2=2$. Then formulae
(\ref{Mat})-(\ref{factor}) for the matrix ${\cal M}_1$ become
\begin{equation}\label{Mat12}
{\cal M}_1=\left(\begin{array}{ccc}\mu_1&0&0\\
0&\mu_2&\mu_2\\\
0&\mu_2&\mu_2
\end{array}\right),\,\,\,{\cal M}_1=P\sigma P^*=\left(\begin{array}{ccc}1&0\\
0&1\\\
0&1
\end{array}\right)\left(\begin{array}{ccc}\mu_1&0\\
0&\mu_2
\end{array}\right)\left(\begin{array}{ccc}1&0&0\\
0&1&1
\end{array}\right).
\end{equation}
\end{Example}

We put $\beta=1$ in (\ref{Leq1}) and choose solution of systems
(\ref{Leq1}) and (\ref{M2e}) in the following form:
\begin{equation}
\varphi=(\varphi_1,\ldots\varphi_K),\,\,
\varphi_j=\exp{\left\{\left(\frac{\nu_j^2+c_1{\lambda}^2_j}{\alpha_2}\right)t_2+\nu_jx+{\lambda}_jy\right\}},\,\
\lambda_j,\nu_j\in{\mathbb{C}}.
\end{equation}
Using Corollary \ref{Corol}, and formula (\ref{adS}) we see that the
functions $\Phi$, $v_0=2(\varphi\Delta^{-1}\varphi^*)_x$,
$\hat{S}=-2D^{-1}\{\Phi^*\Phi\}_y=2(\Delta^{-1})_y $ and ${\cal
M}_1=P\sigma P^*$ satisfy the matrix DS-system (\ref{DS}):
\begin{equation}\label{DSq1}
\begin{array}{c}
 \alpha_2\Phi_{t_2}=\Phi_{xx}+c_1{\Phi}_{yy}+v_0\Phi-c_1\Phi P\sigma P^*\hat{S},\,\, \alpha_2\in i{\mathbb{R}},\\
 v_{0y}=2(\Phi P\sigma P^*\Phi^*)_x,\,\,\hat{S}_{x}=-2(\Phi^*\Phi)_y.
\end{array}
\end{equation}

 Define functions ${\bf q}$, $v_0$ and $S$ in the following way:
\begin{equation}\label{qP}
\begin{array}{l}
{\bf q}\!=\!\Phi
P=\varphi\Delta^{-1}\!P=\!\varphi\left(C+D^{-1}\{\varphi^*\varphi\}\right)^{-1}\!\!P,\,\,
v_0\!=\!2(\varphi\Delta^{-1}\varphi^*\!)_x,\,\,\\
S=2P^*(\Delta^{-1})_yP,
\end{array}
\end{equation}
 where the
matrices $C$ and $P$ are defined by (\ref{C}) and (\ref{factor}).
The integral $D^{-1}$ with respect to $x$ in the formula
$D^{-1}\{\varphi^*\varphi\}$ (\ref{qP}) is realized in the form
\begin{equation}
D^{-1}\{\varphi^*\varphi\}=\left(\frac{\bar{\varphi}_i\varphi_j}{\nu_j+\bar{\nu}_i}\right)_{i,j=1}^{K}.
\end{equation}
Then from equation (\ref{DSq1}) it follows that functions (\ref{qP})
satisfy the $m$-component DS system
\begin{equation}
\begin{array}{c}
 \alpha_2{\bf q}_{t_2}={\bf q}_{xx}+c_1{\bf q}_{yy}+v_0{\bf q}-c_1{\bf
 q}\sigma S,\,\,\, \alpha_2\in i{\mathbb{R}},\\
v_{0y}=2({\bf q}\sigma{\bf q}^*)_x,\,\,S_{x}=-2({\bf q}^*{\bf q})_y,
\end{array}
\end{equation}
that can be rewritten as:
\begin{equation}\label{DSi}
\begin{array}{c}
 \alpha_2q_{i,t_2}={q}_{i,xx}+c_1{q}_{i,yy}+v_0{q}_i-c_1\sum_{j=1}^m\mu_j{q}_j S_{ji},\,\,i=\overline{1,m},\,\, \alpha_2\in i{\mathbb{R}},\\
 v_{0y}=2(\sum_{j=1}^m\mu_j|q|^2_j)_x,\,\,S_{ji,x}=-2(\bar{q}_jq_i)_y,\,\,i,j=\overline{1,m}.
\end{array}
\end{equation}
\begin{remark}\label{rem2}
By the well-known formulae from the matrix theory solutions
(\ref{qP}) can be rewritten as
\begin{eqnarray}
&&q_j=\Phi
e_{K_j}=\sum_{i=K_{j-1}+1}^{K_{j}}\Phi_i=\varphi\Delta^{-1}e_{K_j}=-\frac{\det\left(\begin{array}{ll}
\Delta & e_{K_j} \\ \varphi & 0
\end{array} \right)}{\det \Delta},\,\,j=\overline{1,m};\label{w3}\\
&&v_0=2(\varphi\Delta^{-1}\varphi)^*_x=-2\left(\frac{\det\left(\begin{array}{ll}
\Delta & \varphi^* \\ \varphi & 0 \end{array} \right)}{\det
\Delta}\right)_x,\,\,\\
&&S_{ij}=2(P_{\cdot i})^*(\Delta^{-1})_yP_{\cdot
j}=-2\left(\frac{\det\left(\begin{array}{ll} \Delta & P_{\cdot j}
\\ (P_{\cdot i})^* & 0
\end{array} \right)}{\det
\Delta}\right)_y,\,\,i,j=\overline{1,m}.\label{w1}
\end{eqnarray}
In formulae (\ref{w3})-(\ref{w1}) we set $K_0:=0$. By  $S_{ij}$,
$i,j=\overline{1,m}$, we denote the elements of the corresponding
matrix-function $S$. By $q_j$ we denote the elements of the
vector-function ${\bf q}=(q_1,\ldots,q_m)$. $P_{\cdot i}$ denotes
the $i$-th vector-column of the matrix $P$.
\end{remark}
In the case $m=1$ system  (\ref{DSi}) reduces to the following one:

\begin{equation}\label{DSi1}
\begin{array}{c}
 \alpha_2q_{t_2}={q}_{xx}+c_1{q}_{yy}+v_0{q}-c_1\mu_1qS,\,\, \alpha_2\in i{\mathbb{R}},\\
 v_{0y}=2(\mu_1|q|^2)_x,\,\,S_{x}=-2(|q|^2)_y,
\end{array}
\end{equation}
and the formulae from Remark \ref{rem2} take the form:
\begin{eqnarray}\nonumber
&&q=\varphi\Delta^{-1}{\bf1}^{\top}_K=-\frac{\det\left(\begin{array}{ll}
\Delta & {\bf1}^\top_K \\ \varphi & 0
\end{array} \right)}{\det \Delta},\,\, S=2({\bf 1}
_K\Delta^{-1}{\bf
1}^{\top}_K)_y=-2\left(\frac{\det\left(\begin{array}{ll} \Delta &
{\bf 1}_K^{\top}
\\ {\bf 1}_K & 0
\end{array} \right)}{\det
\Delta}\right)_y,
\nonumber\\&&v_0=2(\varphi\Delta^{-1}\varphi^*)_x=-2\left(\frac{\det\left(\begin{array}{ll}
\Delta & \varphi^* \\ \varphi & 0 \end{array} \right)}{\det
\Delta}\right)_x.\nonumber
\end{eqnarray}
and represent a $K$-soliton solution of the scalar DS system. The
latter in the case $K=1$  takes the form:
\begin{equation}\label{1sol}
q=\frac{\exp({\theta})}{\Delta},\,\,
S=-2\frac{\rm{Re}(\lambda_1)\exp(2\rm{Re}(\theta))}{\rm{Re}(\nu_1)\Delta^2},\,\,
v_0=2\mu_1\frac{\rm{Re}(\nu_1)\exp(2\rm{Re}(\theta))}{\rm{Re}(\lambda_1)\
\Delta^2},\,
\end{equation}
where
$\Delta=\frac{\mu_1}{2\rm{Re}(\lambda_1)}+\frac{1}{2\rm{Re}({\nu}_1)}
\exp{(2\rm{Re}(\theta))}$ and
$\theta=\left(\frac{\nu_1^2+c_1{\lambda}^2_1}{\alpha_2}\right)t_2+\nu_1x+{\lambda}_1y$.
 We shall point out that  DS equation (\ref{DSi1}) consists of two special cases:
\begin{enumerate}
\item $\mu_1=1$. In this case formulae (\ref{w3})-(\ref{w1}) from Remark \ref{rem2} and (\ref{1sol}) represent regular solutions 
 of DS-equation
(\ref{DSi1}).
\item $\mu_1=-1$. In this case Remark \ref{rem2} (formulae (\ref{w3})-(\ref{w1})) and formula (\ref{1sol}) give us singular solutions  of DS-equation
(\ref{DSi1}).
\end{enumerate}
 In a similar way dressing technique has been elaborated for several
multicomponent integrable systems from the k-cKP hierarchy in
\cite{SCD}.

\begin{remark}\label{RemDrom}
It should be also pointed out that besides soliton solutions DS
equation (\ref{DSi1}) has also an important class of solutions known
as dromions. Dromions were obtained at first in \cite{BLMP} via
B\"acklund transformations and nonlinear superposition formulae.
They were also recovered by the Hirota direct method \cite{JHRH} and
Darboux Transformations \cite{LSY}. In paper \cite{GM} dromions were
constructed for matrix DS equation (\ref{DS}). However, the Lax pair
for DS equation used in \cite{BLMP,LSY} involves non-stationary
Dirac operator and differs from the integro-differential Lax pair
(\ref{L0}), (\ref{1eq}) for DS equation that we considered.
\end{remark}
\item $n=3$, $l=2$.
Assume that in addition to equations (\ref{Leq1}) the functions
$\varphi$ and $\psi$ satisfy the equations:
\begin{equation}\label{M3dr}
\begin{array}{l}
M_3\{\varphi\}=c_2\varphi\Lambda^3=c_2L_0^3\{\varphi\},\,\,M_3^{\tau}\{\psi\}=c_2\psi{\tilde\Lambda}^3=c_2(L_0^{\tau})^3\{\psi\},\,\\
 M_3:=\alpha_3\partial_{t_3}-D^3.
 \end{array}
\end{equation}
As it was done in the case $n=2$, $l=1$, we obtained that the
transformed operator $\hat{M}_3$ via BDT $W$ (\ref{W}) has the form
\begin{equation}\label{form}
\hat{M_3}=WM_3W^{-1}=\alpha_3\partial_{t_3}-D^3-\hat{v}_1D-\hat{v}_0+c_2\sum_{i=0}^2\hat{L}^i_0\{\Phi\}{\cal
M}_0D^{-1}((\hat{L}_0^{\tau})^{2-i}\{\Psi\})^{\top},
\end{equation}
where  $\hat{v}_1=3(\varphi\Delta^{-1}\psi^{\top})_x$,
$\hat{v}_0=3\varphi\Delta^{-1}\psi^{\top}(\varphi\Delta^{-1}\psi^{\top})_x-3(\varphi_x\Delta^{-1}\psi^{\top})_x$.
After the change ${\bf q}:=\Phi$, ${\bf r}:=\Psi$, ${\cal
M}_0:=-{\cal M}_1$, $v_1:=\hat{v}_1$, $v_0:=\hat{v}_0$ it can be
shown that the functions ${\bf q}$, ${\bf r}$, $v_1$, $v_0$ with the
matrix ${\cal M}_0$ satisfy equations (\ref{DSL1M3}) in case $c=1$.

Analogously, dressing methods can be done for the other Lax pairs
from Section \ref{extended}. As an example we consider equation
(\ref{Yao23}) with Lax pair (\ref{YOx}):

\item $k=2,n=3,l=1$.
Assume that $N\times K$-matrix functions $\varphi$ and $\psi$
satisfy the equations:
\begin{equation}\label{L2M3}
\begin{array}{c}
L_2\{\varphi\}=\varphi\Lambda,\,\,L_2^{\tau}\{\psi\}=\psi\tilde{\Lambda},\,\,L_2:=\beta_2\partial_{\tau_2}-D^2,\\
M_3\{\varphi\}=c_1\varphi\Lambda^2,\,\,M_3^{\tau}\{\psi\}=c_1\psi{\tilde{\Lambda}}^2,\,M_3:=\alpha_3\partial_{t_3}-D^3.
\end{array}
 \end{equation}
   Using
Theorem \ref{2009} we obtain that the dressed operator $\hat{L}_2$
has the form:
\begin{equation}
\hat{L}_2=\beta_2\partial_{\tau_2}-D^2-2\hat{u}+{\Phi}{\cal
M}_1D^{-1}{\Psi}^{\top},
\end{equation}
where the functions $\Phi$ and $\Psi$ are defined by (\ref{DSM});
$\hat{u}=(\varphi\Delta^{-1}\psi^{\top})_x$. Using Theorems
\ref{2009} and \ref{M}, their corollaries and direct calculations it
can be checked that the functions
$\hat{u}=(\varphi\Delta^{-1}\psi^{\top})_x$, ${\bf q}:=\Phi$, ${\bf
r}=\Psi$ and the matrix ${\cal M}_0:=-{\cal M}_1$ satisfy equations
(\ref{Yao23}). We note that in the case $c_1=0$, $N=1$ we obtain
solutions for the (2+1)-dimensional generalization of higher
Yajima-Oikawa equation:
\begin{equation}\label{Yao23m}
\left.
\begin{array}{l}
\alpha_3\! u_{t_3}\!-\!\frac14
u_{xxx}\!-\!3uu_x\!-\!\frac34\beta_2({\bf q}{\cal M}_0{\bf
r}^{\top}\!)_{\tau_2}\!+\!\frac{3}{4}({\bf q}{\cal M}_0{\bf
r}^{\top}_x\!-\!{\bf q}_x{\cal M}_0{\bf
r}^{\top})_x\!\!=\!\!\frac34\beta_2^2D^{-1}\{\!u_{\tau_2\tau_2}\!\},
\\
\alpha_3{\bf q}_{t_3}-{\bf q}_{xxx}-3u{\bf
q}_x-\frac{3}{2}\left(u_x+\beta_2D^{-1}\{u_{\tau_2}\}+{\bf q}{\cal
M}_0{\bf r}^{\top}\right){\bf q}=0,\\\ \alpha_3{\bf r}_{t_3}-{\bf
r}_{xxx}-3(u{\bf
r})_x+\frac{3}{2}\left(u_x+\beta_2D^{-1}\{u_{\tau_2}\}+{\bf q}{\cal
M}_0{\bf r}^{\top}\right){\bf r}=0,
\end{array}\right.
\end{equation}
that was investigated as a member of (2+1)-dimensional k-cKP  in
\cite{MSS,6SSS}. Its solutions via differential gauge operators were
considered in \cite{LZL1,LZL2}. It should be mentioned that the
solution generating technique via Binary Darboux Transformation $W$
(\ref{W}) presented for matrix DS-system in item 1 can also be
elaborated for system (\ref{Yao23m}) in the case of Hermitian
conjugation reduction ($u$ is a real-valued function, ${\bf
r}=\bar{\bf q}$, $\alpha_3\in{\mathbb{R}}$, $\beta_2\in
i{\mathbb{R}}$, ${\cal M}^*_0=-{\cal M}_0$).
\end{enumerate}
We also point out that some of the matrix equations presented in
items 1-3 and their solutions were also investigated recently
\cite{DMH,GM}.
\section{Conclusions}
In this paper we proposed a new (2+1)-BDk-cKP hierarchy
(\ref{ex2+1}) that generalizes (2+1)-dimensional KP hierarchy
(\ref{spa1}) which was introduced in \cite{MSS,6SSS} and
rediscovered in \cite{LZL1}. For some members of (2+1)-dimensional
k-cKP hierarchy (\ref{spa1}) (e.g. (\ref{Yajima-Oikawa2+1}) and
(\ref{3100})), solutions were obtained via the binary Darboux
transformation dressing method \cite{BS1,PHD}. Dressing methods for
(2+1)-dimensional extension of the k-cKP and modified k-cKP
hierarchy via differential transformations were elaborated recently in \cite{LZL1}.
 New (2+1)-BDk-cKP hierarchy (\ref{ex2+1}) extends hierarchy
 (\ref{spa1}) \cite{MSS,6SSS,LZL1,LZL2}. It also contains
integrable systems and their matrix generalizations that do not
belong to (2+1)-dimensional k-cKP hierarchy (\ref{ex2+1}).
(2+1)-BDk-cKP hierarchy also extends matrix KP hierarchies such as
$(t_A,\tau_B)$-hierarchy \cite{Zeng} and
$(\gamma_A,\sigma_B)$-hierarchy \cite{YYQB}. Some members of new
hierarchy (\ref{ex2+1}) such as Davey-Stewartson systems
(DS-I,DS-II,DS-III), matrix generalizations of (2+1)-dimensional
extensions of the mKdV and their representations via
integro-differential operators were considered recently in \cite{n}.
Dressing methods via binary Darboux transformations presented in
Section \ref{dressed} give us an opportunity to construct the exact
solutions for equations that are contained in (2+1)-BDk-cKP
hierarchy (\ref{ex2+1}).
The most interesting systems obtained from hierarchy (\ref{ex2+1})
arise after a Hermitian conjugation reduction. It is much more
suitable to use BDT for dressing methods in this situation rather
then a differential operator. Elaborated dressing technique method
provides us also with a possibility to investigate exact solutions
of matrix equations that (2+1)-BDk-cKP hierarchy (\ref{ex2+1})
contains. We shall note that the interest to noncommutative
equations has also arisen recently \cite{DMH,GM,GNS,GHN}. Thus one
of problems for further investigation is the generalization of
(2+1)-BDk-cKP hierarchy to the case of noncommutative algebras
\cite{March,DMH,GM,GNS,GHN} (namely, the elements ${\bf q}$, ${\bf
r}$, $u_j$ and $v_i$ belong to some noncommutative ring). It is also
very important to point out that initial operators that we have
chosen for dressing in (\ref{Leq1}), (\ref{M2e}), (\ref{M3dr}),
(\ref{L2M3}) are differential. However, the statements of the
theorems in Section \ref{dressed} (in particular Theorem \ref{2009}
and \ref{M})) concern initial integro-differential operators. The
case of initial (undressed) integro-differential operators should
also be elaborated. We recently obtained that for (1+1)-dimensional
integrable systems it leads to classes of solutions that do not tend
to zero at infinities (e.g. finite density solutions).
Generalizations of (2+1)-dimensional k-cKP hierarchy (\ref{ex2+1})
also provides us with problems for further investigations. In
particular, using (2+1)-BDk-cKP hierarchy it becomes possible to
investigate generalizations of modified (2+1)-extended k-cKP
hierarchy (modified (2+1)-BDk-cKP hierarchy) and elaborate dressing
methods for it. Some members of this hierarchy were recently
considered in \cite{n}. In particular, we investigated Lax
integro-differential representations for  the Nizhnik equation
\cite{Nizhnik80} and (2+1)-dimensional extension of the Chen-Lee-Liu
equation \cite{Chen}. Representations for some of those systems in
the algebra of purely differential operators with
matrix coefficients
can be found in \cite{11Athorne}. Using (2+1)-BDk-cKP hierarchy we
also intend to generalize (2+1)-extended Harry-Dym hierarchy
presented in \cite{WXM}. We recently obtained multi-component
equation with the following Lax pair contained in (2+1)-BDk-cmKP
hierarchy (bidirectional generalization of (2+1)-dimensional
k-constrained modified KP hierarchy):
\begin{equation}\label{2+1L1M3}
\begin{array}{l}
L_0=\partial_y-{\bf q}{\cal M}_0D^{-1}{\bf r}^{\top}D,\,\, \alpha\in{\mathbb{C}},\\
M_3=\alpha_3\partial_{t_3}+c_0D^3-c_2\partial_y^3- 3c_0v_1D^2-
3c_0v_3D+3c_2{\bf q}_y{\cal M}_0D^{-1}\partial_y{\bf
r}^{\top}D+\\+c_2{\bf q}{\cal M}_0D^{-1}\partial_y{\bf
r}^{\top}\partial_yD-3c_2{\bf q}{\cal M}_0\partial_yD^{-1}{\bf
r}^{\top}D{\bf q}{\cal M}_0D^{-1}{\bf r}^{\top}D+\\+3c_2{\bf q}{\cal
M}_0D^{-1}\{{\bf r}^{\top}{\bf q}_x\}_yD^{-1}{\cal M}_0{\bf
r}^{\top}D,\, c_0,c_2\in{\mathbb{R}}.
\end{array}
\end{equation}
where ${\bf q}$ and ${\bf r}$ are $1\times m$-vector functions, $u$,
$v_1$ and $v_3$ are scalar functions, ${\cal M}_0$ is an $m\times
m$-constant matrix. If we impose additional reduction in
(\ref{2+1L1M3}):
\begin{equation}\label{RED}
\begin{array}{l}
{\bf
q}=(\alpha_1q_1,\alpha_1q_2,\alpha_2D^{-1}\{u\},\alpha_2),\,\,{\bf
r}={\bf q},\,{\cal
M}_0={\rm{diag}}(I_2,\sigma),\,\,\\\sigma=\left(\begin{array}{cc}
0&1\\-1&0\end{array}\right), \alpha_1,\alpha_2\in{\mathbb{R}},
\end{array}
\end{equation}
where $I_2$ is the identity matrix of dimension $2\times 2$; $q_1$,
$q_2$ and $u$ are scalar functions. The system that is equivalent to
the Lax equation $[L_0,M_3]=0$ generalizes (2+1)-extension of
modified KdV equation that was investigated in \cite{PHD,LW}
(reduction $c_2=0$, $\alpha_2=0$ in (\ref{2+1L1M3})-(\ref{RED})) and
the Nizhnik equation (\ref{DSL1M3r2scal}) (reduction $\alpha_1=0$ in
(\ref{RED})).

The following remark shows connections between constrained KP
hierarchies and Lax pairs with recursion operators.
\begin{remark}\label{rem11}
It should be also mentioned that the majority of Lax representations
with recursion operators (the representations connected with
bi-hamiltonian pairs of the corresponding integrable systems)
involve integro-differential operators \cite{VSG}. A part of such
representations is contained in the constrained KP hierarchies
(presented in Section III and IV) with additional (not only a
Hermitian conjugation reduction) specific reductions.
\end{remark}
 From Remark \ref{rem11} it follows that one of the essential
problems for further investigation is an adaptation of the dressing
method (Theorem \ref{2009} and \ref{M} from Section IV) for Lax
representations with the recursion operators. The work in this
direction is in progress.
\begin{remark}\label{rem12}
It should be also pointed out that the interesting problem consists
in the possibility of obtaining other classes of exact solutions
(e.g., lumps and dromions; see Remark \ref{RemDrom}) for nonlinear
equations from the proposed (2+1)-BDk-cKP hierarchy.
\end{remark}
We plan to consider the above mentioned problem (Remark \ref{rem12})
in a
separate paper.

\section{Acknowledgment}
The authors thank Professor M\"uller-Hoissen for fruitful
discussions and useful advice in preparation of this paper. The
authors also wish to express their gratitude to the reviewer for the
valuable comments and suggestions and drawing their attention to
Refs. \cite{Gerdjikov1,Gerdjikov2} and \cite{BLMP}.

 O.I. Chvartatskyi thanks to German Academic Exchange Service (DAAD) for financial
support (Codenumber A/12/85461). Yu.M.~Sydorenko (J. Sidorenko till
1998 and Yu.M.~Sidorenko till 2002 in earlier transliteration) is
grateful to the Ministry of Education, Science, Youth and Sports of
Ukraine for partial financial support (Research Grant MA-107F).

\section{Appendix. Proof of Lemma 1.}
\begin{proof}
We will use the following recurrent formulae that can easily be
checked by direct calculation:
\begin{equation}\label{mw}
{\cal M}_{2}=C\Lambda C^{-1}{\cal M}_1+{\cal M}_1
C\tilde{\Lambda}^{\top}C^{-1},
\end{equation}
\begin{equation}\label{vc}
{\cal M}_{l+1}=C\Lambda C^{-1}{\cal M}_{l}+{\cal
M}_{l}C^{-1}{\tilde{\Lambda}}^{\top}C-C\Lambda C^{-1}{\cal
M}_{l-1}C^{-1}{\tilde{\Lambda}}^{\top}C.
\end{equation}
At first we will prove the following formula:
\begin{equation}\label{fla}
{\cal M}_{l+1}\!\!=\!\!\sum_{s=0}^kC\Lambda^sC^{-1}{\cal
M}_{l-k+1}C^{-1}({\tilde{\Lambda}}^{\top})^{k-s}\!C\!-\!\sum_{s=1}^kC\Lambda^sC^{-1}{\cal
M}_{l-k}C^{-1}({\tilde{\Lambda}}^{\top})^{k-s+1}C,k\leq l\!-\!2,
\end{equation}
via induction by $k$. Assume that (\ref{fla}) holds for some
$k<l-2$. We shall show that then this formula holds for $k+1<l-1$
using (\ref{vc}):
\begin{equation}
\begin{array}{l}
{\cal M}_{l+1}=\sum\limits_{s=0}^kC\Lambda^s C^{-1}{\cal
M}_{l-k+1}C^{-1}(\tilde{\Lambda}^{\top})^{k-s}C-\sum\limits_{s=1}^kC\Lambda^s
C^{-1}{\cal
M}_{l-k}C^{-1}(\tilde{\Lambda}^{\top})^{k-s+1}C=\\=\sum\limits_{s=0}^k
C\Lambda^{s+1} C^{-1}{\cal
M}_{l-k}C^{-1}(\tilde{\Lambda}^{\top})^{k-s}C+\sum\limits_{s=0}^kC\Lambda^s
C^{-1}{\cal
M}_{l-k}C^{-1}(\tilde{\Lambda}^{\top})^{k-s+1}C-\\-\sum\limits_{s=0}^kC\Lambda^{s+1}
C^{-1}{\cal
M}_{l-k-1}C^{-1}(\tilde{\Lambda}^{\top})^{k-s+1}C-\sum\limits_{s=1}^k
C\Lambda^s C^{-1}{\cal
M}_{l-k}C^{-1}(\tilde{\Lambda}^{\top})^{k+1-s}C=\\
=\sum\limits_{i=1}^kC\Lambda^i C^{-1}{\cal
M}_{l-k}C^{-1}(\tilde{\Lambda}^{\top})^{k+1-i}C+ {\cal
M}_{l-k}C^{-1}(\tilde{\Lambda}^{\top})^{k+1}C-\\-\sum\limits_{i=1}^{k+1}
C\Lambda^i C^{-1}{\cal
M}_{l-k-1}C^{-1}(\tilde{\Lambda}^{\top})^{k-i+2}C=
\sum\limits_{i=0}^{k+1}C\Lambda^i C^{-1}{\cal
M}_{l-k}C^{-1}(\tilde{\Lambda}^{\top})^{k-i+1}C-\\-\sum\limits_{i=1}^{k+1}
C\Lambda^i C^{-1}{\cal
M}_{l-k-1}C^{-1}(\tilde{\Lambda}^{\top})^{k-i+2}.
\end{array}
\end{equation}
Thus, we have proven (\ref{fla}). After the substitution of $k=l-2$
in (\ref{lemma11}) and using (\ref{mw}) we obtain:
\begin{equation}
\begin{array}{l}
{\cal M}_{l+1}=\sum\limits_{s=0}^{l-1}C\Lambda^sC^{-1}{\cal
M}_{2}C^{-1}({\tilde{\Lambda}}^{\top})^{l-1-s}-\sum\limits_{s=1}^{l-1}C\Lambda^sC^{-1}{\cal
M}_{1}C^{-1}({\tilde{\Lambda}}^{\top})^{l-s}=\\=\sum\limits_{s=0}^{l-1}C\Lambda^{s+1}C^{-1}{\cal
M}_{1}C^{-1}({\tilde{\Lambda}}^{\top})^{l-1-s}+\sum\limits_{s=0}^{l-1}C\Lambda^sC^{-1}{\cal
M}_{1}C^{-1}({\tilde{\Lambda}}^{\top})^{l-s}C-\\-\sum\limits_{s=1}^{l-1}C\Lambda^{s}C^{-1}{\cal
M}_{1}C^{-1}({\tilde{\Lambda}}^{\top})^{l-s}C=\sum\limits_{s=0}^{l}C\Lambda^sC^{-1}{\cal
M}_{1}C^{-1}({\tilde{\Lambda}}^{\top})^{l-s}C.\end{array}
\end{equation}
This finishes the proof of formula (\ref{lemma11}) and Lemma 1.

\end{proof}


\end{document}